\newcommand{\sgn}{\mathrm{sgn}\ }
 \newtheorem{thm}{Theorem}
 \newtheorem{defi}{Definition}
 \newtheorem{prop}{Proposition}
 \newtheorem{col}{Corollary}
  \newcommand{\tcr}{\textcolor{black}}
\begin{document}
\count\footins = 1000
\preprint{APS/123-QED}

\title{A theoretical basis for cell deaths}

\author{Yusuke Himeoka}
\email[]{yhimeoka@g.ecc.u-tokyo.ac.jp}
 \affiliation{Universal Biology Institute, University of Tokyo, 7-3-1 Hongo, Bunkyo-ku, Tokyo, 113-0033, Japan}

 \author{Shuhei A. Horiguchi}%
 \affiliation{%
 Department of Mathematical Informatics, Graduate School of Information Science and
Technology, The University of Tokyo, 7-3-1, Hongo, Bunkyo-ku, Tokyo 113-8656, Japan
}
\affiliation{
    Nano Life Science Institute, Kanazawa University, Kakumamachi, Kanazawa, 920-1192, Japan
}
\affiliation{%
Institute of Industrial Science, The University of Tokyo, 4-6-1, Komaba, Meguro-ku, Tokyo
153-8505, Japan
}%


\author{Tetsuya J. Kobayashi}
\affiliation{Universal Biology Institute, University of Tokyo, 7-3-1 Hongo, Bunkyo-ku, Tokyo, 113-0033, Japan}

\affiliation{%
 Department of Mathematical Informatics, Graduate School of Information Science and
Technology, The University of Tokyo, 7-3-1, Hongo, Bunkyo-ku, Tokyo 113-8656, Japan
}

\affiliation{%
 Institute of Industrial Science, The University of Tokyo, 4-6-1, Komaba, Meguro-ku, Tokyo
153-8505, Japan
}%

\date{\today}

\begin{abstract}
Understanding deaths and life-death boundaries of cells is a fundamental challenge in biological sciences. In this study, we present a theoretical framework for investigating cell death. We conceptualize cell death as a controllability problem within dynamical systems, and compute the life-death boundary through the development of ``stoichiometric rays''. This method utilizes enzyme activity as control parameters, exploiting the inherent property of enzymes to enhance reaction rates without affecting thermodynamic potentials. This approach facilitates the efficient evaluation of the global controllability of models. We demonstrate the utility of our framework using its application to a toy metabolic model, where we delineate the life-death boundary. The formulation of cell death through mathematical principles provides a foundation for the theoretical study of cellular mortality.

\end{abstract}

\maketitle

\section*{Significance Statement}
What is death? This fundamental question in biology lacks a clear theoretical framework despite numerous experimental studies. In this study, we present a new way to understand cell death by looking at how cells can or cannot control their states. We define a ``dead state'' as a state from which a cell cannot return to being alive. Our method, called ``Stoichiometric Rays'', helps determine if a cell's state is dead based on enzymatic reactions. By using this method, we can quantify the life-death boundary in metabolic models. The present framework provides a theoretical basis and a tool for understanding cell death.

\section{Introduction}
Death is one of the most fundamental phenomena, and comprehension of the life-death boundary is a pivotal issue in the study of biological systems. Cell death within simple unicellular model organisms such as {\it Escherichia coli} and yeast has been extensively studied \cite{Casolari2018-zo,Nakaoka2017-qy,Spivey2017-no,Wang2010-td,Allocati2015-oo,Horowitz2010-rd,Fagerlind2012-ev,Himeoka2020-ho,Schink2019-dd,Biselli2020-fs,Maire2020-ty,Wu2024-vz}. 

What do we call ``death'' and from what features of a given cell do we judge the cell is dead? Despite extensive studies, such characterization of cell death is still under debate \cite{Maire2020-ty,Schink2021-vp,Gray2019-kr,Umetani2021-vz,Laman_Trip2022-so,Walker2023-fd}. \tcr{Indeed, there is a debate as to whether the viable but non-culturable (VBNC) state \cite{Oliver2010-dp,xu1982survival} is a dead state or not, without a clear definition of death \cite{Song2021-xv,Kirschner2021-az}. Confusion due to lack of a clear definition has also occurred in research on bacterial persistence \cite{Jung2019-xl,Amato2014-xi,Balaban2011-rx,Balaban2004-da} and the community has reached a consensus on the definition of persister \cite{Balaban2019-zp} to avoid unnecessary slowing down of research. A clear definition of death is essential for the advancement of cell death research.} 


The current circumstances of microbial cell death research are in stark contrast to other topics in systems biology, where both experimental and theoretical approaches are integrated to explore cellular phenomena quantitatively and to elucidate the underlying biochemical design principles. Microbial adaptation, robustness of living systems, and network motifs are hallmarks of crosstalk between experimental and theoretical approaches \cite{Barkai1997-by,Berg2004-aa,Shinar2007-vj,Shinar2010-ns,Hopfield1974-tj}. However, despite substantial experimental data, theoretical explorations of cell death remain underdeveloped. Theoretical studies on microbial death have primarily focused on estimating death rates and assessing their impacts on population dynamics supposing a predefined concept of ``death'' \cite{Biselli2020-fs,Allocati2015-oo,Horowitz2010-rd,Fagerlind2012-ev,Himeoka2020-ho}. Theories for delineating what is death and evaluating cellular viability are indispensable for extracting the quantitative nature of cell death, and accordingly, life. 

The necessity for theories of cell death is beyond comprehending experimental observations; it is imperative for the advancement of biological sciences, particularly through the integration of burgeoning computational technologies. Recent developments in computational biology have facilitated the construction of whole-scale kinetic cell models \cite{Thornburg2022-nm} and the application of advanced machine-learning technologies, enabling large-scale simulations of cellular responses to environmental and genetic changes \cite{Li2022-cq,Choudhury2022-vv}. This includes modeling scenarios for cell death as well as cell growth \cite{Himeoka2022-dh,Thornburg2022-nm}. The development of a robust mathematical framework to define and evaluate cell death within these models is essential to advance our understanding of cellular mortality.

In the present manuscript, we aimed at constructing a theoretical foundation for cell death using cell models described by ordinary differential equations. We introduce the definition of the dead state based on the controllability of the cellular state to predefined living states. Next, we develop a tool for judging whether a given state is dead for enzymatic reaction models, the {\it stoichiometric rays}. The method leverages the inherent feature of enzymatic reactions in which the catalyst enhances the reaction rate without altering the equilibrium. The stoichiometric rays enables the efficient computation of the global controllability of the models.

\section{A framework for cell deaths} \label{sec:framework}

In the present manuscript, we formulate the death judgment problem as a controllability problem of mathematical models of biochemical systems. \tcr{Figure~\ref{fig:scheme} is a graphical summary of the framework proposed in the present manuscript.} Herein, we focus on well-mixed deterministic biochemical reaction systems. The dynamics of these systems are described by the following ordinary differential equations: 
\begin{equation}
    \dv{\bm x (t)}{t}= \mathbb S \bm J(\bm u(t),\bm x(t)),\label{eq:ode}
\end{equation}
where $\bm x:[0,T]\to \mathbb R_{\geq 0}^N$, $\bm u:[0,T]\to\mathbb R^P$, and $\bm J:\mathbb R^P\times\mathbb R_{\geq 0}^N\to \mathbb R^R$ represent the concentration of the chemical species, control parameters, and reaction flux, respectively. Here we denote the number of chemical species, control parameters, and reactions as $N,P$ and $R$, respectively, and $\mathbb S$ is $N\times R$ stoichiometric matrix.

\begin{figure}[h!]
    \begin{center}
    \includegraphics[width = 120 mm, angle = 0]{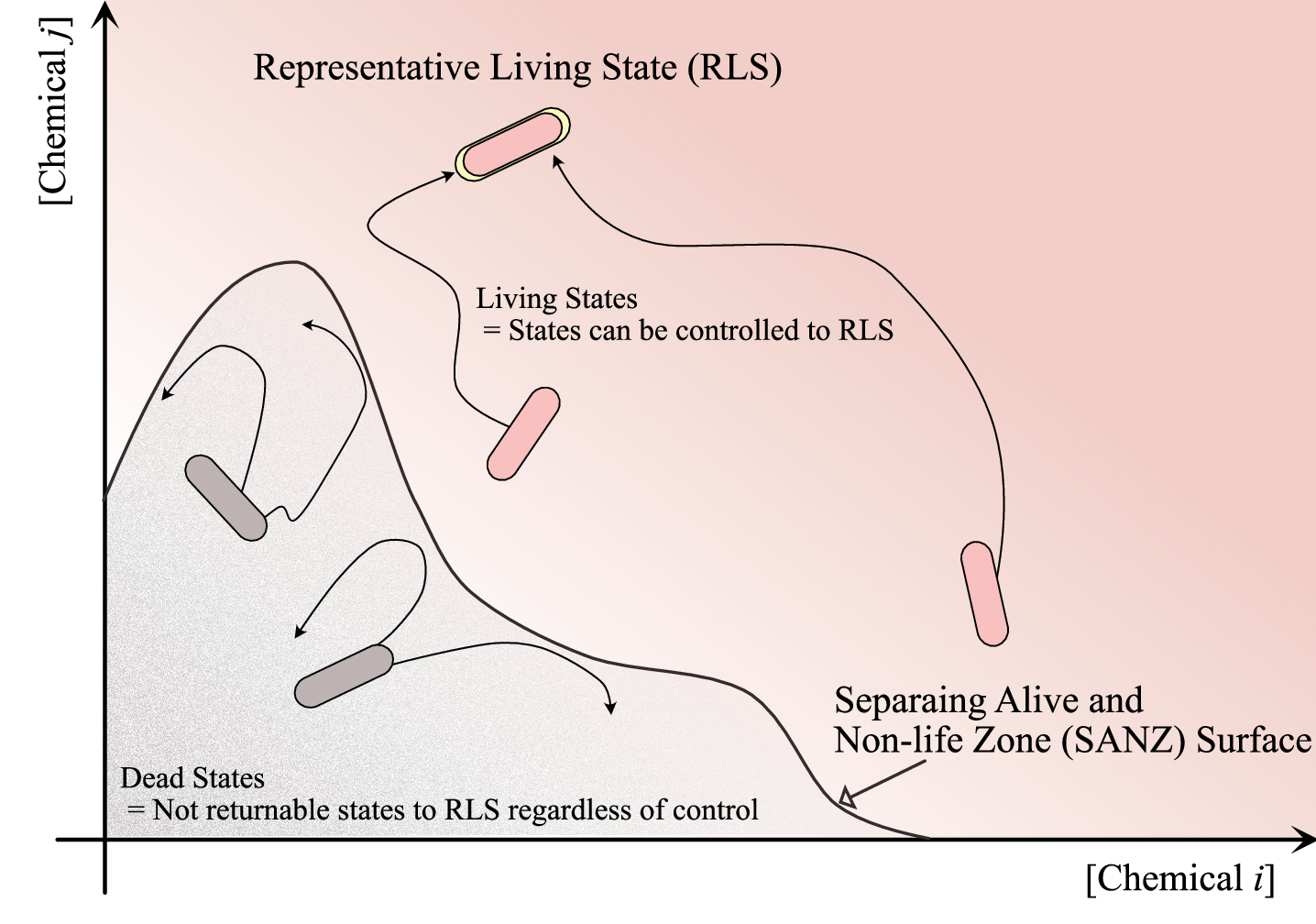}
    \caption{\tcr{A graphical summary of the proposed framework. We consider ``death'' as the loss of controllability to the predefined representative living states (RLS), $X$. If a given state can be controllabled back to the RLS through modulations of biochemical parameters (concisely defined in the main body), the state is judged to be ``living''. If no control back to the RLS is possible, regardless of the control, the state is ``dead'' in our framework. We call the set of dead states the dead set $D(X)$ as a function of the RLS. The boundary between the controllable set and the dead set $\partial C(X)\cap\partial D(X)$ is called {\it Separating Alive and Non-life Zone (SANZ) hypersurface} $\Gamma(X)$, where SANZ is taken from {\it Sanzu River}, the mythological river that separates the world of life and the afterlife in Japanese Buddhism \cite{stone2008death}. In the figure, the pink and gray colours represent the living and dead states respectively. The RLS is by definition living and is coloured pink with the yellow highlighting. The two states in the middle right are considered living and are coloured pink because they can be controlled back to RLS. On the other hand, the two states in the left middle are coloured gray because they cannot be controlled back to RLS. The existence of control back to the RLS can be computed by the proposed method {\it the stoichiometric rays} (Sec.~\ref{sec:rays}) for models where the enzymatic activities and external metabolites concentrations as the control parameters. Note that our framework does not say anything about which states should be in the RLS, while some useful mathematical arguments for the choice of the RLS are provided in the Appendix.}}
        \label{fig:scheme}
      \end{center}
    \end{figure}

\tcr{In the following, we formally introduce the definition of dead states for the further mathematical arguments in the appendix. Intuitively, we first need to define the ``representative living states'' which are reference points of the living states. The choice of the representative living state is not given by the framework. We assume that one has a working hypothesis about which states should be chosen as representative living states. Then we define the dead states as the state from which it is impossible to return to any of the representative living states, regardless of control, such as modulation of enzymatic activities and the external concentrations of metabolites.}

\begin{defi}[Trajectories]\label{def:trajectories}
We gather all the solutions of Eq.~\eqref{eq:ode} satisfying $\bm x(0)=\bm x_0$ and $\bm x(T)=\bm x_1$ and denote the set by ${\cal T}(T,\bm x_0,\bm x_1)$. In addition, the union of ${\cal T}(T,\bm x_0,\bm x_1)$ for all $T\geq 0$ is denoted as 
\begin{equation}
    {\cal T}(\bm x_0,\bm x_1):=\bigcup_{T\in[0,\infty)}{\cal T}(T,\bm x_0,\bm x_1).
\end{equation}
\end{defi}
Note that ${\cal T}(T,\bm x_0,\bm x_1)$ and ${\cal T}(\bm x_0,\bm x_1)$ contain all the solutions obtained by varying the control parameter $\bm u(t)$.

\begin{defi}[Controllable Set]\label{def:ctrl_set}
    The controllable set of given state $\bm x$ is defined as the set of states from which the state can be controlled to the target state $\bm x$. It is given by 
    \begin{eqnarray}
        C(\bm x) &=& \{\bm x_0\in \mathbb R_{\geq 0}^N \mid {\mathcal T}(\bm x_0,\bm x)\neq \varnothing\}.
    \end{eqnarray}
    We define the controllable set of a given set $X$ as a union of the controllable set of points in $X$, that is, 
    \begin{eqnarray}
        C(X)&=&\bigcup_{\bm x\in X}C(\bm x)\\
        &=&\{\bm x_0\in \mathbb R_{\geq 0}^N \mid {}^\exists \bm x\in X\ {\mathrm s.t.}\ {\mathcal T}(\bm x_0,\bm x)\neq \varnothing\}.
    \end{eqnarray}
\end{defi}

In the present framework, we assume that the {\it representative living states} $X$ are given a priori as reference points of the living states. Using the controllable set and representative living states, we define the dead state as follows:
\begin{defi}[Dead State]\label{def:death}
    A state $\bm z$ is called the dead state with respect to the representative living states $X$ if 
    \begin{equation}
        \bm z\notin C(X),
    \end{equation}
    holds. 

    The set of all dead states with respect to $X$ form the dead set and is denoted as $D(X)$, i.e., 
    \begin{equation}
        D(X)=\mathbb R_{\geq 0}^N\backslash C(X).
    \end{equation}
        \tcr{We term the boundary of the controllable set and the dead set $\Gamma(X):=\partial C(X)\cap\partial D(X)$} as the {\it Separating Alive and Non-life Zone (SANZ) hypersurface}. 
\end{defi}
The {\it SANZ hypersurface} is derived from a mythical river in the Japanese Buddhist tradition, the Sanzu River, which separates the world of the living and the afterlife \cite{stone2008death}. 

Note that the ``death'' is defined here based sorely upon the controllability of the state. The dead state can be a point attractor, limit cycle, infinitely long relaxation process, and etc. As Def.~\ref{def:death} states nothing on the features of the dead state except the controllability, the dead states can have, for instance, metabolic activity comparable to that of the representative living states. In such cases, the ``dying state'' would be an appropriate word to describe the state, while we use the term ``dead state'' for uniformity \footnote{\tcr{An intuitive terminology is to call the state in $D(X)$ with no temporal change (${\rm d}\bm x/{\rm d}t=0$) ``dead'' (i.e. at the fixed point) and ``dying'' otherwise. However, in the present framework, whether a given steta is dead or dying depends on the specific value of the control parameter. Moreover, one can make all states in $D(X)$ to be dying (${\rm d}\bm x/{\rm d}t\neq0$) states by keep changing the control parameter, for example by setting $\bm u(t)=\bm u_0|\sin(t)|$, where $\bm u_0$ is a constant vector. Thus, in the present framework, the distinction between dying and dead is neither essential nor possible.}}. 

\tcr{In the context of microbiology, for example, the steadily growing states are a straightforward choice of representative living states, while it is not necessary to be the growing states. In this case, the definition of death (Def.~\ref{def:death}) can be understood by analogy with the plating assay. In the plating assay, cells are placed on the agar plate and are considered alive if they can form a colony within a given time period. The present definition with the growth state as the representative living state checks the potential for regrowth. In this sense, the definition is parallel to the plating assay. The differences lie in the universality of the controls that the cells can perform and the time scale of the control. In the present framework, we assume that cells are universal chemical systems that are able to find and execute a control to restart growth if there is a control to do so. We also do not care about the time scale of the control. The cells are considered to be living if they can regrow even if it takes an infinite amount of time, i.e. our ``agar plate'' never dry out. }

\tcr{
    It is also noteworthy that if the continuously growing states are chosen as representative living states, the quiescent states such as the dormant states, spore and so on should be judged as living states, since cells in these states typically retain potentials for regrowth.}

\tcr{The dead state and the dead set depend on the choice of representative living states. We have presented useful mathematical statements for choosing the representative living states in the Appendix. Readers interested in the choice of representative living states are encouraged to read that section, although it is not necessary for understanding the main body of this manuscript.}

\section{The stoichiometric rays: a tool for the death judgment}\label{sec:rays}

Thus far, we have proposed a possible definition of the dead state. This definition raises the question of whether such global controllability can be computed. In this section, we show that the controllable set is efficiently computable for models with enzymatic activity and the external concentration of chemicals as the control parameter. Before moving to the main body, we briefly describe previous studies related to the controllability of chemical reaction systems.

Most biochemical reactions can take place within reasonable timescales only by the catalytic aid of enzymes \cite{wolfenden2001depth}, and thereby, the control of intracellular states relies on the modulation of enzyme activities and concentrations (hereafter collectively referred to as activities). Thus, it is indispendable to develop a mathematical technique to compute the controllability of intracellular states by modulating enzyme activities for constructing a framework of cellular viability in terms of Def.~\ref{def:death}. 

Generic frameworks for the controllability of chemical reaction systems have been studied in the control theory field \cite{Saperstone1973-zk,Farkas1998-zk,Dochain1992-gh,Drexler2016-bu}. However, there are critical limitations in previous studies when applied to biological systems; Saperstone {\it  et al.} \cite{Saperstone1973-zk} studied the controllability of linear reaction rate functions, with which we could not model many-body reactions. In Farkas {\it  et al.} \cite{Farkas1998-zk} and Dochain {\it  et al.} \cite{Dochain1992-gh}, nonlinear models were dealt with, but only local controllability was discussed. In contrast, global controllability of a nonlinear model was studied by Drexler {\it  et al.} \cite{Drexler2016-bu}, although they allowed the kinetic rate constants to be negative-valued, to which no physical reality corresponds. 

In the following, we present a simple method to compute the global controllability of enzymatic reaction models, the {\it stoichiometric rays}. The concept is hinted at the inherent feature of enzymatic reactions; enzymes enhance reactions without altering the equilibrium \cite{atkins2023atkins}. We deal with the phase space to the positive orthant, $\mathbb R_{> 0}^N$ \footnote{In typical cases, the method works for $\mathbb R_{\geq 0}^N$, including the two examples presented in the latter half of the manuscript. However, it is known that special care is necessary for mathematical claims at the boundary ($x_i=0$ for one or more chemicals) \cite{Feinberg2019-hp}. In the present manuscript, we avoid dealing with the boundary.}.

Hereafter, we restrict our attention to the cases in which the reaction rate function in Eq,~\eqref{eq:ode} is expressed by the following form
\begin{eqnarray}
J_r(\bm u,\bm x) &=& u_r f_r(\bm x)p_r(\bm x),  \label{eq:reaction}\\
f_r(\bm x)&>&0,\\
p_r(\bm x)&=&\prod_{c} x_c^{n^+_{cr}}-k_r\prod_{c} x_c^{n^-_{cr}}. \label{eq:prop}
\end{eqnarray}
We refer to $\bm f(\bm x)$ and $\bm p(\bm x)$ as the kinetic, and thermodynamic parts, respectively. $f_r(\bm x)$ is a strictly positive function. $n_{cr}^\pm \geq 0$ represents the reaction order of chemical $c$ in the forward ($+$) or backward ($-$) reactions of the $r$th reaction. $k_r\geq 0$ corresponds to the Boltzmann factor of the reaction $r$. $\bm u\in \mathbb R_{\geq 0}^R$ is the control parameter, representing the activities of the enzymes. In the following, we focus on the modulations of enzymatic activities where the forward and backward reactions cannot be modulated independently \footnote{\tcr{The enzymes change the activation energy of reactions to facilitate the reaction rate, but cannot directly change the chemical equilibrium, i.e. the direction of the reactions \cite{atkins2023atkins}.}}. However, when the forward and backward reactions are independently modulated, for example by controlling the external concentrations of chemicals, we can extend our method to deal with such situations by treating the forward and backward reactions as independent reactions.

Note that the popular (bio)chemical reaction kinetics have the form shown in Eq,~\eqref{eq:reaction}-\eqref{eq:prop}, such as the mass-action kinetics, (generalized) Michaelis-Menten kinetics, ordered- and random many-body reaction kinetics, and the ping-pong kinetics \cite{cornish2013fundamentals}. An important feature of the reaction kinetics of this form is that the direction of the reaction is set by the thermodynamic part as $\bm p(\bm x)$ is related to the thermodynamic force of the reaction. On the other hand, the remaining part $\bm f(\bm x)$ is purely kinetic, and it modulates only the absolute value of the reaction rate, but not the direction. For reversible Michaelis-Menten kinetics, $v_{\rm max}([S]-k[P])/(1+[S]/K_S+[P]/K_P)$, the thermodynamic part corresponds to its numerator, $[S]-k[P]$, \tcr{and the remaining part $v_{\rm max}/(1+[S]/K_S+[P]/K_P)$ is the kinetic part}\footnote{\tcr{The separation of $\bm f(\bm x)$ and $\bm p(\bm x)$ is not necessary. However, the separation is useful to distinguish between the thermodynamic contribution and the kinetic contribution to the reaction rate, i.e. $\bm f$ depends on the enzymatic reaction mechanism (it is whether Michaelis-Menten, ping-pong, mass-action, etc.), whereas $\bm p$ does not. The separation also simplifies the calculation of the balance manifold, which will be introduced later.}}.

Now, we allow the enzyme activities $\bm u$ to temporally vary in $\mathbb R_{\geq 0}^R$ and compute the controllable set of a given target state $\bm x^*$, $C(\bm x^*)$. The model equation is given by 
\begin{equation}
    \dv{\bm x(t)}{t}=\mathbb S\bm u(t)\odot\bm f(\bm x(t))\odot\bm p(\bm x(t)),\label{eq:std}
\end{equation}
where $\odot$ denotes the element-wise product.

First, we divide the phase space $\mathbb R^N_{> 0}$ into subsets based on the directionalities of the reactions. Note that the kinetic part is strictly positive and $\bm u\in \mathbb R_{\geq 0}^R$ holds. Consequently, the directionalities of the reactions are fully set only by the thermodynamic part $\bm p(\bm x)$ \footnote{For the directionality of the $r$th reaction at $u_r=0$, we define it by $\sgn \tilde{J}_r(\bm u,\bm x)|_{u_r=0}=\lim_{u_r\to +0}\sgn \tilde{J}_r(\bm u,\bm x)$.}. Let  $\bm \sigma$ be a binary vector in $\{1,-1\}^R$, and define the {\it direction subset} $W_{\bm \sigma}$ as
\begin{equation}
    W_{\bm \sigma}=\{\bm x\in\mathbb R_{> 0}^N\mid \sgn \bm p(\bm x)=\bm \sigma\}.
\end{equation}
Next, we introduce the {\it balance manifold} of reaction $r$ given by
\begin{equation}
{\mathcal M}_r=\{\bm x\in \mathbb R^N_{> 0}\mid p_r(\bm x)=0\}.
\end{equation}
Now the phase space $\mathbb R_{> 0}^N$ is compartmentalized by the balance manifolds $\{\mathcal M_r\}_{r=0}^{R-1}$ into the direction subsets $W_{\bm \sigma}$ \footnote{On the balance manifolds, $\sum_{c} {n^+_{cr}}\ln x_c-\ln k_r-\sum_{c}{n^-_{cr}}\ln  x_c=0$ holds. This is a linear equation of $\ln \bm x$, and thus, each direction subset is connected. Depending on how the balance manifolds intersect, the number of the direction subset changes. At most, there are $2^R$ direction subsets for a model with $R$ reactions.}.

Let {\it path} $\bm \xi(t)$ be the solution of Eq.~\eqref{eq:std} with a given $\bm u(t)$ reaching the {\it target state} $\bm x^*$ at $t=T$ from the {\it source state } $\bm x_0$, i.e., $\bm \xi(T)=\bm x^*$ and $\bm \xi(0)=\bm x_0$, meaning that $\bm x_0\in C(\bm x^*)$. Additionally, we assume that if $\bm \xi(t)$ intersects with any $\mathcal M_{i}$, the intersection is transversal \footnote{In the followings, we assume the transversality of the intersections between $\bm \xi(t)$ and $\mathcal M_i$.}. As $\bm \xi (t)$ is the solution of Eq.~\eqref{eq:std}, $\bm \xi (t)$ is given by 
\begin{eqnarray}
    \bm \xi(t)&=&\bm \xi(0) + \mathbb S \int_0^t\bm u(s)\odot \bm f(\bm \xi(s))\odot \bm p(\bm \xi(s))ds. \label{eq:xi1}
\end{eqnarray}

Next, we fragment the path. Note that the union of all the closures of the direction subsets, $\overline{W}_{\bm \sigma}$, covers $\mathbb R_{> 0}^N$, and $\bm \xi(t)$ intersects transversally to the balance manifolds, if any. Thus, we can divide the interval $[0,T]$ into $L+1$ segments $I_i=(\tau_i,\tau_{i+1}), (0\leq i\leq L)$ so that $\bm \xi(t)\in W_{\bm \sigma^{(i)}}$ holds for $t\in I_i$. In each interval $I_i$, Eq,~\eqref{eq:xi1} is simplified as 
\begin{eqnarray}
    \bm \xi(t)&=&\bm \xi(\tau_i)+\label{eq:xi2}\\&&\mathbb S \bm \sigma^{(i)}\odot \int_{\tau_i}^{t}\bm u(s)\odot \bm f(\bm \xi(s))\odot|\bm p(\bm \xi(s))|ds\nonumber
\end{eqnarray}

Recall that all the functions inside the integral are non-negative. Thus, $\tilde{\bm u}(t):=\bm{u}(t)\odot\bm f(\bm \xi(t)) \odot|\bm p(\bm \xi(t))|$ satisfies $\tilde{\bm u}(t)\in \mathbb R_{\geq 0}^R$. This means that we can consider the ramped function $\tilde{\bm u}(t)$ as a control parameter. By introducing $\bm U_i(t):= \int_{\tau_i}^t \tilde{\bm{u}}(s)ds$, Eq,~\eqref{eq:xi2} is further simplified as 
\begin{eqnarray}
    \bm \xi(t)=\bm \xi(\tau_i)+\mathbb S \bm \sigma^{(i)}\odot \bm U_i(t), \ t \in I_i\label{eq:xi3}
\end{eqnarray}

Therefore, for the solution of Eq.\eqref{eq:std}, $\bm \xi(t)$ with $\bm \xi(0)=\bm x_0$ and $\bm \xi(T)=\bm x^*$, the following holds; There exist a set of time-intervals $\{I_i\}_{i=0}^{L}$, reaction directionalities $\{\bm \sigma^{(i)}\}_{i=0}^L$, and non-negative-valued functions $\{\bm U_i(t)\}_{i=0}^{L}$ such that path in the interval $t\in I_i$ is represented in the form of Eq,~\eqref{eq:xi3}.

The converse of the above argument holds. Suppose that there is a continuous path $\bm\xi\subset\mathbb R_{> 0}^N$ with $\bm x_0$ and $\bm x^*$ as its endpoints. Since $\bm \xi$ is a continuous path, it is continuously parameterizable by $t\in[0,T]$ so that $\bm \xi(0)=\bm x_0$ and $\bm \xi(T)=\bm x^*$ hold. By following the above argument conversely, we can see that if there exist $\{I_i\}_{i=0}^{L}$, $\{\bm \sigma^{(i)}\}_{i=0}^L$, and $\{\bm U_i(t)\geq \bm 0\}_{i=0}^{L}$ satisfying Eq,~\eqref{eq:xi3} and ${\rm d}\bm U_i(t)/{\rm d}t\geq \bm 0$,
there exists a control $\bm u(t)$ given by
\begin{equation}
\bm u(t)=\frac{{\rm d}\bm U_i(t)/{\rm d}t}{\bm f(\bm x(t))\odot|\bm p(\bm x(t))|},\ t\in I_i
\end{equation}
and $\bm u(\tau_i)=\lim_{t\to\tau_i-0}\bm u(t)$, where the division is performed element-wise. With this $\bm u(t)$, $\bm \xi (t)$ satisfies Eq,~\eqref{eq:std}. Thus, $\bm x_0$ is an element of the controllable set $C(\bm x^*)$.

The summary of the above conditions leads to the definition of the {\it single stoichiometric path}.
\begin{defi}[Single Stoichiometric Path]\label{def:path}
A continuously parameterized path $\bm \xi(t)\subset \mathbb R^N_{>0}$ is called a single stoichiometric path from $\bm x_0$ to $\bm x^*$ with signs $\{\bm \sigma^{(i)}\}_{i=0}^L$ if there exists

\begin{enumerate}[a.]
\item $\{I_i=(\tau_i,\tau_{i+1})\}_{i=0}^L, (\tau_0=0,\tau_{L+1}=T,\tau_i<\tau_{i+1})$,
\item $\{\bm \sigma^{(i)}\}_{i=0}^L$, \\$(\sgn \bm p(\bm x_0)=\bm \sigma^{(0)}$, $\sgn \bm p(\bm x^*)=\bm \sigma^{(L)}, \bm \sigma^{(i)}\neq \bm \sigma^{(i+1)})$,
\item  $\{\bm U_i(t)\geq \bm 0\}_{i=0}^L$ 
\end{enumerate}
satisfying the following conditions;
\begin{enumerate}[1.]
    \renewcommand{\labelenumiii}{\arabic{enumiii}.}
\item $\bm \xi(0)=\bm x_0$ and $\bm \xi(T)=\bm x^*$.
\item $\bm \xi(t)\in W_{\bm \sigma^{(i)}}$ for $t\in I_i$.
\item $\bm \xi(t)=\bm \xi(\tau_i)+\mathbb S \bm \sigma^{(i)}\odot \bm U_i(t)$ for $t\in I_i$.
\item ${\rm d}\bm U_i(t)/{\rm d}t\geq \bm 0$ for $t\in I_i$.
\end{enumerate}
\end{defi}

Let ${\rm SP}_L(\bm x^*)$ denotes the source points of all the single stoichiometric paths to $\bm x^*$ with all possible choices of the sign sequence with a length less than or equal to $L$, and ${\rm SP}(\bm x^*):=\lim_{L\to \infty}{\rm SP}_L(\bm x^*)$. We call ${\rm SP}(\bm x^*)$ {\it the stoichiometric paths} while ${\rm SP}_L(\bm x^*)$ is termed {\it the finite stoichiometric paths} with length $L$. Note that the stoichiometric paths equals to the controllable set.

The stoichiometric paths is a useful equivalent of the controllable set, whereas the conditions 2 and 4 in the definition are laborious to check whether a given path and $\bm U_i(t)$ satisfies them. Thus, we introduce {\it the single stoichiometric ray}. It is easy to compute, and the collection of them gives exactly the stoichiometric paths if the thermodynamic part $\bm p(\bm x)$ of a model equation is linear.

\tcr{Let us show that if $\bm p(\bm x)$ is linear, the single stoichiometric ray guarantees the existence of the single stoichiometric path. Since $\bm p(\bm x)$ is linear, all $W_{\bm \sigma}$'s are either convex polytopes or polyhedra. Suppose there is a set of points}
\begin{equation}
\tcr{\{\bm \xi_0=\bm x_0,\bm \xi_1,\bm \xi_2,\ldots,\bm \xi_{L},\bm \xi_{L+1}=\bm x^*\}}
\end{equation}
\tcr{satisfying} 
\begin{eqnarray}
    &&\tcr{\bm \xi_i\in \overline{W}_{\bm \sigma^{(i-1)}}\cap\overline{W}_{\bm \sigma^{(i)}},}\\
    &&\tcr{\bm \xi_{i+1}=\bm \xi_{i}+\mathbb S \bm \sigma^{(i)}\odot \bm U_i,}\\
    &&\tcr{\bm U_i\geq \bm 0.}
\end{eqnarray}
\tcr{Then, we can construct a line graph $\bm \xi(t)$ parameterized by $t\in[0,T]$ which connects the points $\{\bm \xi_i\}_{i=0}^{L+1}$ as}
\begin{equation}
\tcr{\bm \xi(t):= \bm \xi_{\underline{t}}+\mathbb S \bm \sigma^{({\underline{t}})}\odot \bm U_{{\underline{t}}}s(t),}\label{eq:ray_construct}
\end{equation}
\tcr{where $$s_L(t)=(Lt/T-\lfloor Lt/T\rfloor).$$$\underline{t}$ is defined as $\lfloor t/(LT)\rfloor$ with $\lfloor\cdot \rfloor $ as the floor function. Note that the function $s_L(t)$ is a repetition of the of linear function from $0$ to $1$ for $L$ times (see Fig.~\ref{fig:ray_description} for graphical representations). Here, $\bm  U_{{\underline{t}}}s_L(t)\geq \bm 0$ and ${\rm d}(\bm U_{{\underline{t}}}s_L(t))/{\rm d}t\geq \bm 0$ follow.} Additionally, the points between $\bm \xi_i$ and $\bm \xi_{i+1}$ are in the direction subset $W_{\bm \sigma^{(i)}}$. Thus, $\bm \xi(t)$ and $\bm U_i$ satisfy the conditions 2 to 4 in Def.~\ref{def:path} in $t\in I_i=(iT/L,(i+1)T/L,)$. \tcr{Also, since we choose $\{\bm \xi_i\}_{i=0}^{L+1}$ so that $\bm \xi_0=\bm x_0$ and $\bm \xi_{L+1}=\bm x^*$, the line graph represented by Eq.~\eqref{eq:ray_construct} satisfies all the conditions in Def.~\ref{def:path}. We call this line graph {\it the single stoichiometric ray}. If there is a single stoichiometric ray from $\bm x_0$ to $\bm x^*$, then there is a single stoichiometric path from $\bm x_0$ to $\bm x^*$. Thus $\bm x_0$ is an element of the stoichiometric paths $SP(\bm x^*)$. Note that this is only true if $\bm p(\bm x)$ is linear.}

\begin{figure}[htbp]
    \begin{center}
    \includegraphics[width = 150 mm, angle = 0]{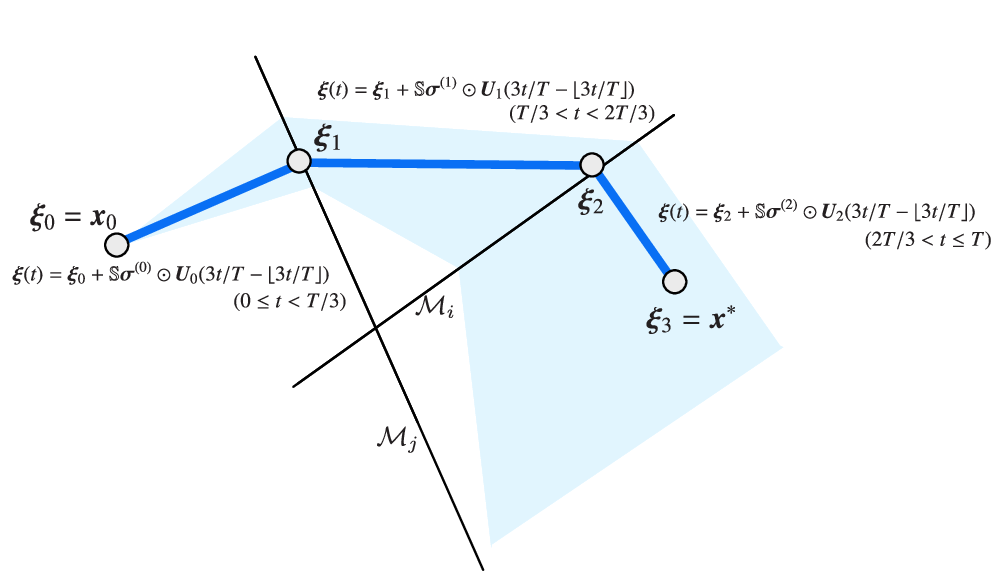}
    \caption{\tcr{A visualisation of a single stoichiometric ray (blue line graph) defined in Eq.~\eqref{eq:ray_construct}. Since the function $s_L(t)=(Lt/T-\lfloor Lt/T\rfloor)$ is a repetition of the linear function from $0$ to $1$ for $L$ times, Eq.~\eqref{eq:ray_construct} represents a line graph connecting the vertices $\{\bm \xi_i\}_{i=0}^{L+1}$. ${\cal M}_i$ and ${\cal M}_j$ are the balance manifolds of reaction $i$ and $j$ respectively. Cyan shaded region is the region reachable from $\bm x_0$ by a single stoichiometric path.}}
            \label{fig:ray_description}
      \end{center}
    \end{figure}

\tcr{Let us formally introduce the single stoichiometric ray.}
\begin{defi}[Single Stoichiometric Ray]\label{def:ray}
    \tcr{A line graph $\bm \xi(t)$ parameterized by $t\in [0,T]$ satisfying the following conditions is called a single stoichiometric ray from $\bm x_0$ to $\bm x^*$} with signs $\{\bm \sigma^{(i)}\}_{i=0}^L$ where $\sgn \bm p(\bm x_0)=\bm \sigma^{(0)}$, $\sgn \bm p(\bm x^*)=\bm \sigma^{(L)}, \bm \sigma^{(i)}\neq \bm \sigma^{(i+1)}$. 

    \begin{enumerate}[1.]
    \item \tcr{$\tau_0=0<\tau_1<\cdots<\tau_{L}<\tau_{L+1}=T,$}
    \item $\bm \xi(0)=\bm x_0$ and $\bm \xi(T)=\bm x^*$,
    \item \tcr{$\bm \xi(\tau_i)\in \overline{W}_{\bm \sigma^{(i-1)}}\cap\overline{W}_{\bm \sigma^{(i)}} \ (1\leq i<L),$ }
    \item \tcr{$\bm \xi(\tau_i+s(\tau_{i+1}-\tau_i))=\bm \xi(\tau_i)+\mathbb S \bm \sigma^{(i)}\odot \bm U_i s, \ (s\in[0,1),\bm U_i\geq \bm 0)$.}
    \end{enumerate}
    \end{defi}

Let us define ${\rm SR}_L(\bm x^*)$ and ${\rm SR}(\bm x^*)$, termed {\it the finite stoichiometric rays with length $L$} and the {\it stoichiometric rays} \footnote{The term ``stoichiometric rays" is derived from the stoichiometric cone. The stoichiometric rays is a generalized concept of the stoichiometric cone \cite{Feinberg2019-hp} defined by
    $${\rm SC}(\bm x)=\{\bm y\in \mathbb R_{\geq0}^N\mid \bm x+\mathbb S \bm v=\bm y,\bm v\geq 0\}.$$
Indeed, if all the reactions of the model are irreversible, the stoichiometric rays of $\bm x^*$ is equivalent to the stoichiometric cone with the replacement of $\mathbb S\bm v$ by $-\mathbb S\bm v$.} in the same manner as the stoichiometric path. Note that only the evaluations of $L$ discrete points are required to determine the existence of a single stoichiometric ray.

Let us remark that ${\rm SR}_L(\bm x^*)={\rm SP}_L(\bm x^*)$ holds if the thermodynamic part $\bm p(\bm x)$ is linear, and thus, the stoichiometric rays is equivalent to the controllable set. If the thermodynamic part is nonlinear, ${\rm SP}_L(\bm x^*)\subseteq{\rm SR}_L(\bm x^*)$ holds. \tcr{If a single stoichiometric path connecting $\bm x_0$ to $\bm x^*$ exists, corresponding single stoichiometric ray exists which is the line graph connecting from 
$\bm x_0$ to $\bm x^*$ via the intersection points between the stoichiometric path and the balance manifolds (see Fig.~\ref{fig:ray_and_path}A). However, the converse does not hold. Note that the stoichiometric path checks the consistency of the reaction direction at all points on the path, while the stoichiometric ray checks the consistency only at the discrete points on the manifolds (Fig.~\ref{fig:ray_and_path}B). Thus, the stoichiometric rays may overestimate the controllable set, while the computational cost is much lower than the stoichiometric paths.}


Let us now show how the stoichiometric rays is computed using the reversible Brusselator model as an illustrative example. The model equation is given by
\begin{eqnarray*}
    \dv{}{t}\begin{pmatrix}
    a\\
    b
    \end{pmatrix}
    &=&\mathbb S\bigl(\bm u\odot \bm f(a,b)\odot \bm p(a,b)\bigr)\\ 
    &=&
    \begin{pmatrix}
    1&-1&1\\
    0&1&-1
    \end{pmatrix}
    \begin{pmatrix}
    u_0\\
    u_1\\
    u_2
    \end{pmatrix}
    \odot
    \begin{pmatrix}
    a_0\\
    1\\
    a^2
    \end{pmatrix}
    \odot
    \begin{pmatrix}
    1-k_0 a\\
    a-k_1 b\\
    b-k_2 a
    \end{pmatrix}.
\end{eqnarray*}
Note that the thermodynamic part vector $\bm p$ in the reversible Brusselator is a linear function of $a$ and $b$ whereas the reaction rate function vector $\bm f\odot \bm p$ is a nonlinear function.

The finite stoichiometric rays for the two choices of the target state $\bm x^*$ are shown in Fig.~\ref{fig:brussel} (For the computational procedure, see SI text section S1 and SI Codes). In Fig.~\ref{fig:brussel}, ${\rm SR}_L(\bm x^*)\backslash{\rm SR}_{L-1}(\bm x^*) $ is filled with distinct colors. As shown in Fig~\ref{fig:brussel}A, it is possible to reach the point $\bm x^*=(0.6,0.6)$ from anywhere in the phase space, ${\rm SR}_L(\bm x^*)=\mathbb R_{> 0}^2$ ($L\geq 3$), while we need to choose the starting point from the specific region for the other target point $\bm x^*=(1.5,1.5)$ as in Fig.~\ref{fig:brussel}B because ${\rm SR}_L(\bm x^*)\subsetneq \mathbb R_{> 0}^2$ ($L\geq 0$) holds for the point.

\begin{figure}[htbp]
    \begin{center}
    \includegraphics[width = 150 mm, angle = 0]{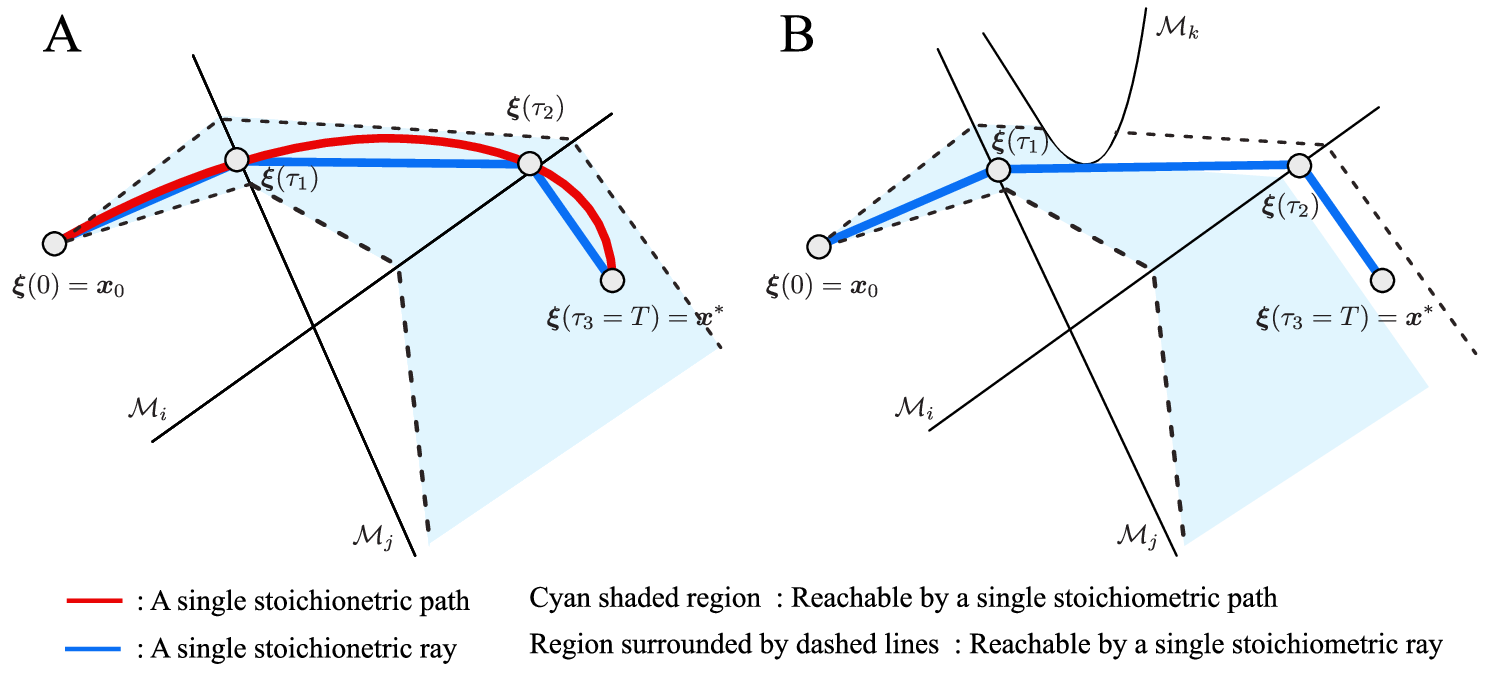}
    \caption{\tcr{(A) A single stoichiometric path (red) and a single stoichiometric ray (blue) from $\bm x_0$ to $\bm x^*$ with $2$ reaction direction flips. The cyan shaded region is the reachable region from $\bm x_0$. The single stoichiometric ray is an approximation of the single stoichiometric path with a line graph. ${\cal M}_i$ and ${\cal M}_j$ are the balance manifolds of reaction $i$ and $j$, respectively. (B) A case where there is no single stoichiometric path from $\bm x_0$ to $\bm x^*$, but there is a single stoichiometric ray. The cyan shaded region is reduced by the blocking by the balance manifold ${\cal M}_k$ (here we assume that the transition through the interior of ${\cal M}_k$ is not possible), and thus the stoichiometric paths from $\bm x_0$ to $\bm x^*$ do not exist. However, since the stoichiometric rays evaluate the existence of the states at the boundary of the direction subsets, ignoring whether the paths are blocked inside the direction subset, a single stoichiometric ray is judged to exist. The region surrounded by the dashed lines is the reachable region by a single stoichiometric ray. Note that not only the single stoichiometric path shown in red in panel A is blocked, but also all paths from $\bm x_0$ to $\bm x^*$.}}
        \label{fig:ray_and_path}
      \end{center}
    \end{figure}

\begin{figure}[htbp]
\begin{center}
\includegraphics[width = 120 mm, angle = 0]{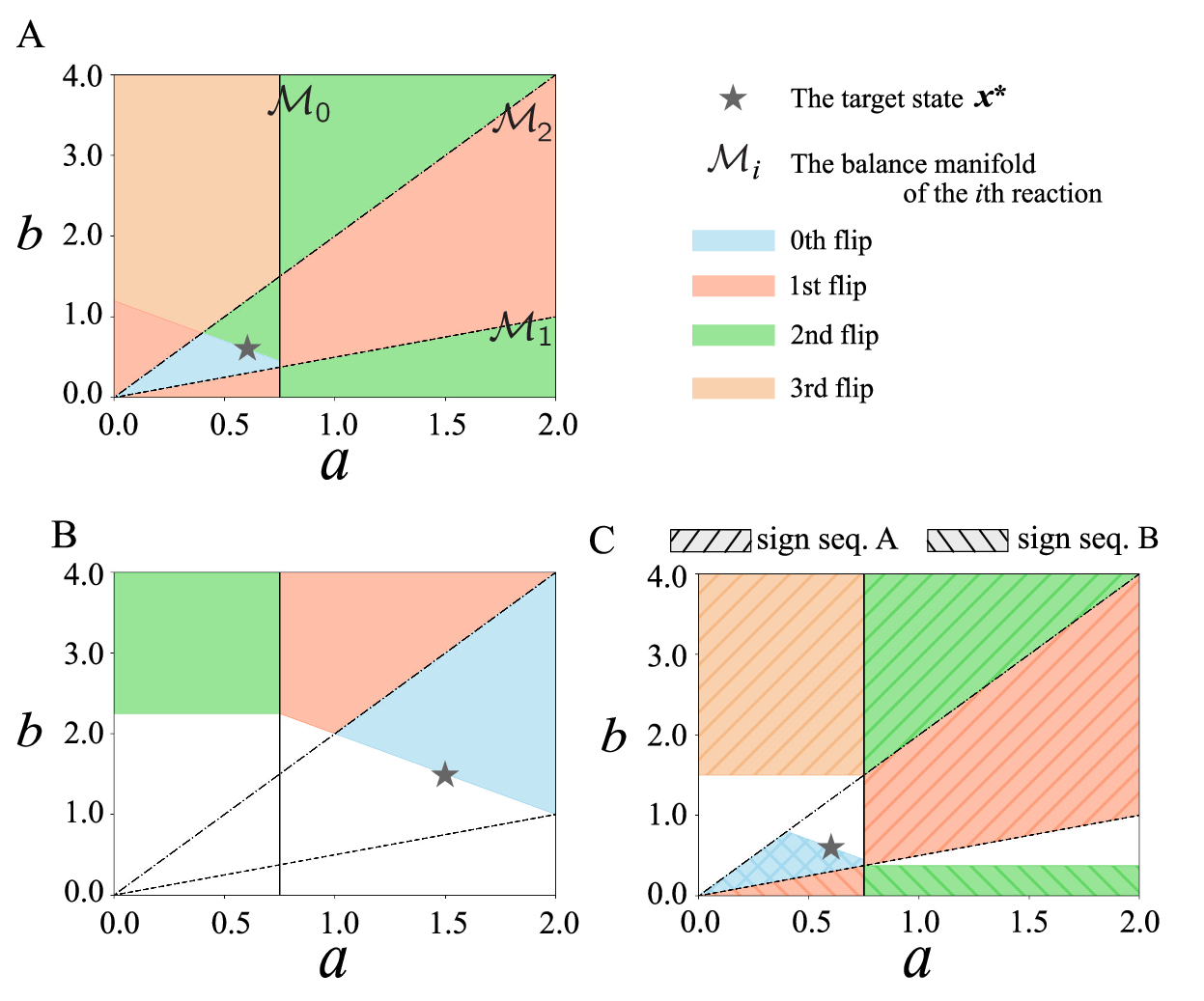}
\caption{The stoichiometric rays of the reversible Brusselator model. The balance manifold $\mathcal M_0,\mathcal M_1,$ and $\mathcal M_2$ are represented by the solid, dash, and dot-dash lines, respectively. The phase space is restricted to $[0,2]\times [0,4]$ for computation. (A) The stoichiometric rays of $\bm x^*=(0.6,0.6)$. The regions covered for the first time at each flip of the reaction direction are filled with different colors. The whole region is covered within $3$ flips. (B) The stoichiometric rays of $\bm x^*=(1.5,1.5)$. In this case, the bottom left region is never covered even if we allow arbitrarily number of flips. (C) The unions of all single stoichiometric rays of $\bm x^*=(0.6,0.6)$ with two example sign sequences are depicted. The sign sequence A and B follow are $(1,-1,-1)\to(-1,-1,-1)\to(-1,-1,1)\to(1,-1,1)$ and $(1,-1,-1)\to(1,1,-1)\to(-1,1,-1)$, respectively. 
$\bm k=(4/3,2,2)$ is used.}
    \label{fig:brussel}
  \end{center}
\end{figure}

\section{Death of a toy metabolic model}
\subsection{Computing the dead set}

Next, we apply the stoichiometric rays to a toy model of metabolism and compute its dead set. The model is an abstraction of the glycolytic pathway consisting of the metabolites X, Y, ATP, and ADP (a network schematic and reaction list are shown in Fig.~\ref{fig:glycolysis}A).

The rate equation is given by \footnote{For a derivation of the linear thermodynamic part for the reaction ${\rm R}_4$, see SI text section S2.}
\begin{eqnarray}
    &&\dv{}{t}\begin{pmatrix}
    x\\
    y\\
    z\\
    w
    \end{pmatrix}
    =\mathbb S\bigl(\bm u\odot \bm f(\bm x)\odot \bm p(\bm x)\bigr)\label{eq:metabolic}\\ 
    &=&
    \begin{pmatrix}
    -1&-1&-1&0&0&0\\
    0&0&1&-1&-1&0\\
    1&1&-1&0&3&-1\\
    -1&-1&1&0&-3&1
    \end{pmatrix}
    \begin{pmatrix}
    u_0\\
    u_1\\
    u_2\\
    u_3\\
    u_4\\
    u_5
    \end{pmatrix}
    \odot
    \begin{pmatrix}
    xw-k_0 z\\
    xw\\
    xz-k_2 yw\\
    y\\
    yw-k_4 z\\
    z-k_5w
    \end{pmatrix},\nonumber
\end{eqnarray}
where $x,y,z,$ and $w$ represent the concentrations of metabolites X, Y, ATP, and ADP, respectively. In the second line of the above equation, $\bm f(\bm x)=\bm 1$ is omitted. The thermodynamic part of the model is nonlinear. As the total concentration of ATP and ADP is a conserved quantity in the model, we replace $w$ by $z_{\rm tot}-z$ where $z_{\rm tot}$ is the total concentration of the two (we set $z_{\rm tot}$ to unity in the following).

\begin{figure}[htbp]
    \begin{center}
    \includegraphics[width = 120 mm, angle = 0]{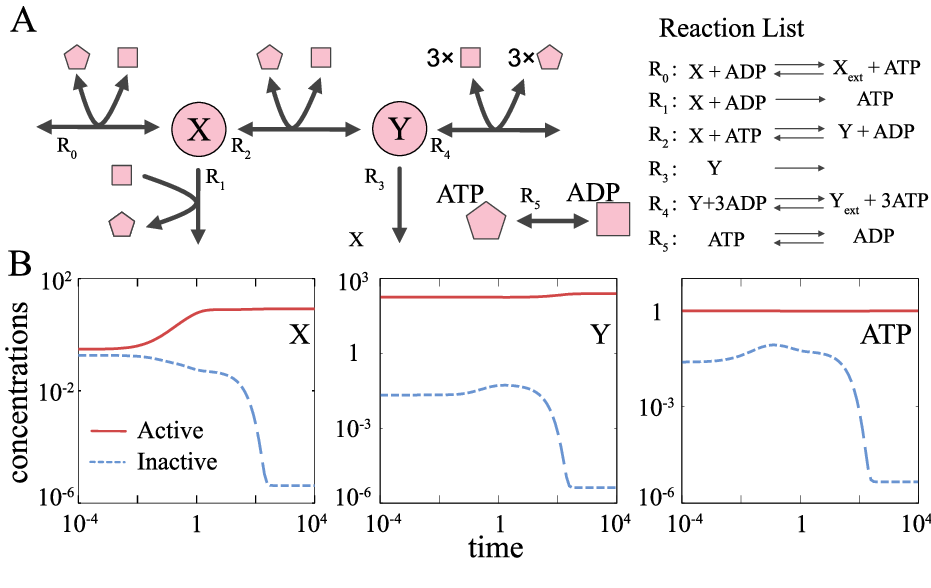}

    \caption{(A) A schematic of the reaction network. The list of the reactions is shown on the right side of the diagram. (B) An example of dynamics of metabolites X, Y, and ATP converging to the active (red), and inactive attractors (blue), respectively. Parameters are set to $\bm u=(1.0,10.0,0.1,0.01,1.0,0.1)$ and $k_0=10.0,k_2=0.1,k_4=1,k_5=10^{-5}$.}
        \label{fig:glycolysis}
      \end{center}
    \end{figure}

As shown in Fig.~\ref{fig:glycolysis}B, the model exhibits bistability with a given constant $\bm u$. On one attractor (the endpoint of the red line), X molecules taken up from the environment are converted into Y and secreted into the external environment resulting in the net production of a single ATP molecule per single X molecule. In contrast at the other attractor (the endpoint of the blue line), almost all X molecules taken up are secreted to the external environment via reaction ${\rm R}_1$. Y is also taken up from the external environment while consuming 3ATPs. Note that reaction ${\rm R}_0$ consumes a single ATP molecule to take up a single X molecule, and the reaction ${\rm R}_1$ produces a single ATP molecule. For the model to obtain a net gain of ATP, X must be converted into Y by consuming one more ATP molecule via reaction ${\rm R}_2$. However, once the model falls into the attractor where ATP is depleted (endpoint of the blue line in Fig.~\ref{fig:glycolysis}B), it cannot invest ATP via ${\rm R}_2$, and the model is ``dead-locked" in the state. Hereafter, we refer to the former and latter attractors as the active attractor $\bm x^A$ and inactive attractor $\bm x^I$, respectively.

Now, we compute the dead states of the model according to Def.~\ref{def:death}. In this framework, we first need to select the representative living states. Here, we suppose that the active attractor $\bm x^A$ is the representative living state. 
According to Def.~\ref{def:death}, we compute the complement set of the stoichiometric paths of the active attractor, $D(\bm x^A)=\mathbb R_{> 0}^N \backslash {\rm SP}(\bm x^A)$. This complement set corresponds to the dead set with $\bm x^A$ as the representative living state. Note that choosing $\bm x^A$ as the representative living state is equivalent to choosing mutually reachable attractors from/to $\bm x^A$ by continuously changing the enzyme activities and their basin of attraction (Recall Eq.~\eqref{eq:thm_basin}). By selecting a single-point attractor as the representative living state, we can effectively select a large region in the phase space.

Due to the challenges associated with computing the stoichiometric paths, we utilize the stoichiometric rays for the computation. Let us term $\tilde{D}(\bm x^A)=\mathbb R_{> 0}^N \backslash {\rm SR}(\bm x^A)$ the {\it underestimated dead set}. As the stoichiometric rays is a superset of the stoichiometric paths, the underestimated dead set is a subset of the dead set. Thus the stoichiometric rays does not show a false negative of returnability; If $\bm x\in \tilde{D}(\bm x^A)$ holds, then $\bm x$ is guaranteed to be dead with respect to the active attractor. 

\begin{figure}[htbp]
\begin{center}
\includegraphics[width = 120 mm, angle = 0]{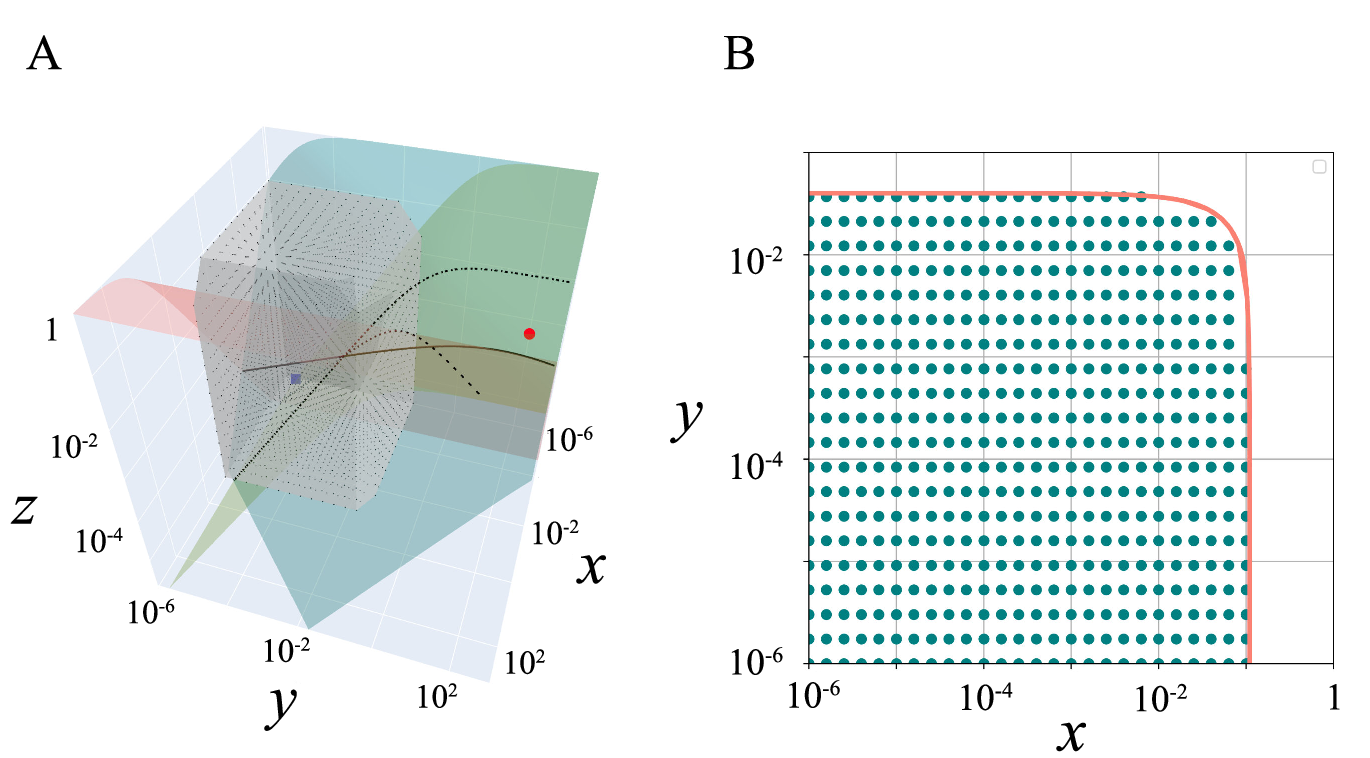}
\caption{(A) The active attractor (red point), the inactive attractor (blue box), and the points in the underestimated dead set (black dots) are plotted. The gray rectangle is the convex hull of the dead states. The balance manifolds $\mathcal M_0$, $\mathcal M_2$, and $\mathcal M_4$ are drawn as the salmon, green-blue, and olive-colored surfaces. The balance manifold $\mathcal M_5$ is not illustrated because it is not necessary for the argument. The intersections of the balance manifold pairs $(\mathcal M_0$, $\mathcal M_2)$, $(\mathcal M_2$, $\mathcal M_4)$, and $(\mathcal M_4$, $\mathcal M_0)$ are represented by the solid, dash, and dot-dash curves, respectively. (B) An analytic estimate of the boundary of the underestimated dead set (red curve) and the points in the underestimated dead set (green-blue points) are plotted. $\log_{10}z$ value is fixed to $-1.7$. We set $L=4$. The same $\bm k$ values with Fig.~\ref{fig:glycolysis} are used.}
    \label{fig:nonreturn}
  \end{center}
\end{figure}

In Fig.~\ref{fig:nonreturn}A, the underestimated dead set with a finite $L$, $\tilde{D}_{L}(\bm x^A)=\mathbb R_{> 0}^N \backslash {\rm SR}_L(\bm x^A)$ is depicted in the phase space (For the computational procedure, see SI text section S3 and SI Codes). The points inside the underestimated dead set cannot return to the active attractor regardless of how the enzyme activities are modulated with a given number of possible flips. The inactive attractor is contained in the underestimated dead set, and thus the inactive attractor is indeed the ``death'' attractor at least with the $L=4$ flips \footnote{We computed the underestimated dead set for $L=6$ with a lower resolution of the point sampling due to the computational cost. We confirmed that the increase of $L$ from $4$ to $6$ does not make the underestimated non-returnable set larger. (see Fig.~S3).}. Also, the boundary of the overestimated controllable set $\tilde{C}_{L}(\bm x^A)$ and the underestimated dead set $\tilde{D}_{L}(\bm x^A)$ now gives an estimate of the SANZ surface, $\tilde{\Gamma}(\bm x^A):=\partial \tilde{C}_{L}(\bm x^A)\cap \partial \tilde{D}_{L}(\bm x^A)$ \footnote{\tcr{The non-emptiness of the dead set is due to the fact that negative-valued controls are not allowed. It is shown from the rank relation of the Lie algebra of the model that if negative values are allowed for the control parameters to take, controls from any state to any state are feasible \cite{nijmeijer1990nonlinear,coron2007control}. In particular, the model shows a bifurcation at $\bm u\approx (1.0,10.0,0.1,-0.08,1.0,0.1)$, where the inactive attractor becomes unstable, and the model shows the unistability of the active attractor. Thus, modulating $u_3$ to $\approx -0.08$ is a way to revive it.}}.

In Fig.~\ref{fig:nonreturn}B, we plot the points in the underestimated dead set with an analytically estimated SANZ surface of the set on the 2-dimensional slice of the phase space where $z$ value is set to a single value. The SANZ surface is calculated from the thermodynamic and stoichiometric constraints that each single stoichiometric ray should satisfy. The curves match well with the boundary of the underestimated dead set (for the detailed calculation, see SI Text section S4).

\subsection{The global transitivity of the toy metabolic model}

The application of the stoichiometric rays is not limited only to computations of the dead set with pre-defined representative living states. The stoichiometric rays enables us to investigate the mutual transitivity among states without a priori definition of living states. In this section, we present a transition diagram of the toy metabolic model \footnote{The computation of the transition diagram corresponds to the computation of the equivalent classes and partial order $(\looparrowright,\mathbb R_{>0}^3/{\leftrightharpoons})$ defined in the appendix section.}. The transition diagram thus dictates the global transitivity of the model. This provides insights for choosing the representative living states, and in addition, criteria for the model selection suitable for investigating cell death.

For the computation of the transition diagram (procedure of the computation is illustrated in Fig.~\ref{fig:hasse}A), we sampled several points in the phase space (Fig.~\ref{fig:hasse}B) in addition to the active and inactive attractors, and computed controllability for all pairs of the sampled points. Note that we computed only the controllability to the active attractor in the previous part, but here we compute the controllability of all pairs of points mutually. With the computation, we construct the directed graph $G_L$ where the nodes are the sampled points and the directed edges represent the controllability from the source point to the target point, i.e, we put the directed edge from $\bm x$ to $\bm y$ if $\bm x\in {\rm SR}_L(\bm y)$ holds.

Next, we computed the condensation of $G_L$ which, denoted $\tilde{G}_L$. Condensation is a directed graph in which strongly connected components (SCCs) of $G_L$ are contracted to a single node. The nodes of $G_L$ within the same SCC are mutually reachable \footnote{\tcr{Each node in the condensation graph $\tilde{G}_L$ corresponds to an equivalent class (see Appendix).}}. Hereafter, we refer to the condensation graph $\tilde{G}_L$ as a transition diagram. 

The transition diagram is shown in Fig.~\ref{fig:hasse}C \footnote{\tcr{The transition diagram describes the partial order relation $\looparrowright$ on $\mathbb R_{>0}^3/{\leftrightharpoons}$ (see Appendix).}}. In addition, because the condensation graph is a directed acyclic graph, the diagram highlights the hierarchy of the state transitions or the ``potential'' of the states. The active and inactive attractors are in the SCCs represented by the yellow- and pink squares in Fig.~\ref{fig:hasse}C, respectively. Let us refer to the pink SCC as the terminal SCC, $C^*$, since it has only in-edges, and thus, the states in $C^*=\Omega(\mathbb R_{>0}^3)$ (cf.~Def.~\ref{def:terminal} in Appendix) cannot escape from the set regardless of the control $\bm u(t)$. 

The computation of the transition diagram provides a guide for selecting the representative living states, and in addition, a criterion for the model selection suitable for investigating cell death. It is commonly believed that controlling to the living states is difficult and possible only from limited states. Also, all living and non-living systems can be non-living (dead state), i.e., control to the dead state should always be possible regardless of the source state. If we accept these beliefs, the representative living state should be taken from the SCCs except for the terminal SCC. When we additionally require that the representative living state be the attractor (i.e., the state is stable without temporal modulation of enzymatic activities), the representative living state should be taken from the SCC containing the active attractor (the yellow square in Fig.~\ref{fig:hasse}C). 

In the toy metabolic model, the active attractor and the inactive attractors are in the non-terminal SCC and the terminal SCC, respectively. However, this is not always the case for any biochemical model. In order to study cell death employing mathematical models, we may require the models to satisfy conditions such as the one above. The transition diagram is useful method for checking if a given model satisfies requirements as the diagram captures the global transitivity in the model.

\begin{figure}[htbp]
    \begin{center}
    \includegraphics[width = 150 mm, angle = 0]{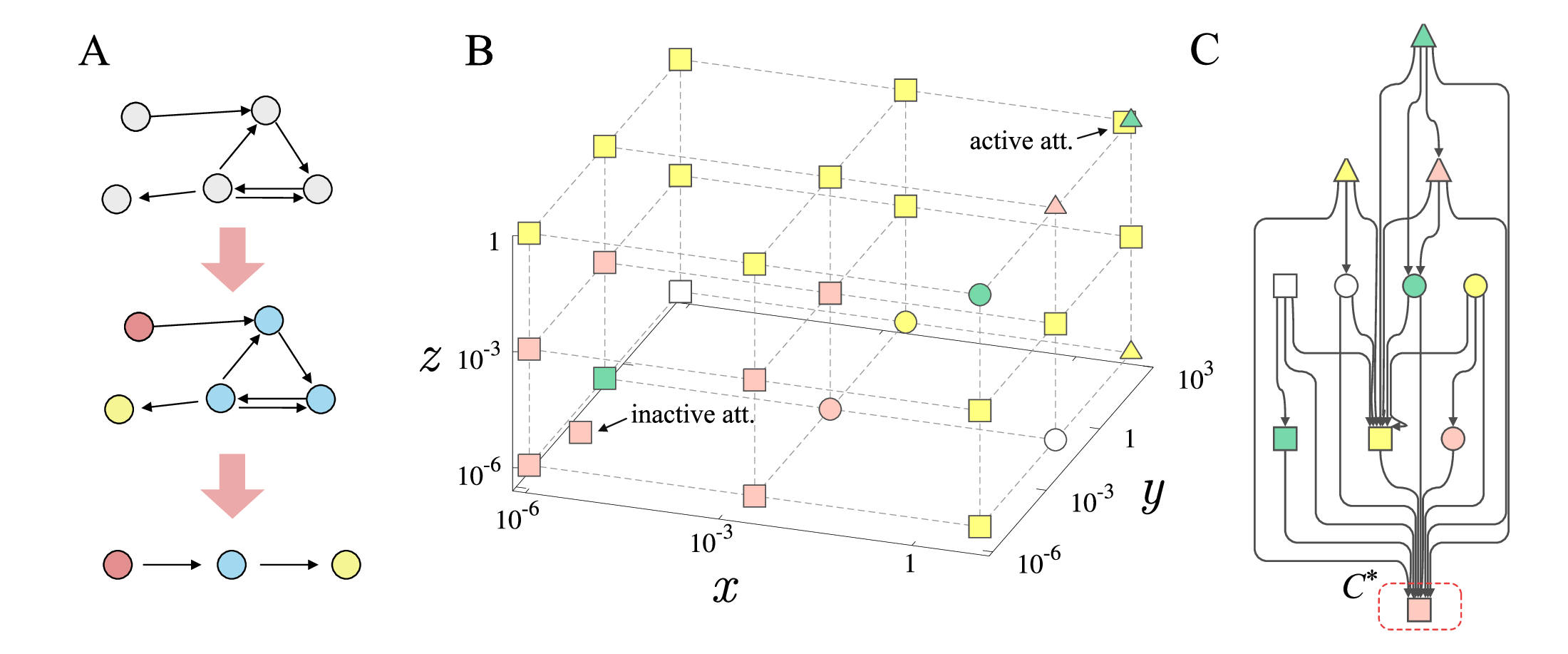}

\caption{\tcr{The transition diagram of the discrete states of the toy metabolic model. (A) Construction of the transition diagram. First, the controllability from/to all pairs of points are computed and the directed graph is drawn (top graph). In the graph, the directed edge represents that there are controls from the state at the tail of the arrow to the state at the head of the arrow. Then the nodes are grouped into strongly connected components (middle graph). The component graph is obtained by merging the nodes belonging to the same strongly connected components into a single node (bottom graph).} (B) The points for which the mutual transitivity is computed. The points are grouped and labeled by the strongly connected components (SCCs). The dashed lines are eye guides. The active attractor (yellow square) and the inactive attractor (pink square) are shown together. Note that the green triangle is on the grid, but not the active attractor. (C) The transition diagram of the SCCs. Only the two SCCs (pink- and yellow squares) contain more than one state. The terminal SCC (pink square, indicated by the red-dashed rectangle), $C^*$, has only in-edges. The same $\bm k$ values with Fig.~\ref{fig:glycolysis} are used, and $L=4$.}
        \label{fig:hasse}
      \end{center}
    \end{figure}

\section{Discussion}
Questions such as ``what is life?'' and ``what is death?'' relate to how we can characterize living and dead states. Providing a straightforward answer for such a characterization is challenging even for mathematical models of cells. However, quantitative evaluation of the ``life-death boundary'' (SANZ hypersurface), or in other words, characterization of the difference between life and death can be possible. In the present manuscript, we proposed a theoretical framework to study the dead set and SANZ hypersurface with the development of a tool for computing it.

In the present framework, the dead state is defined based on the controllability to the representative living state. Thanks to the useful mathematical features of controllability relationship, $\bm x\rightharpoonup \bm y$, the death judgment is robust on the choice of the representative living state; The choice of representative living state is arbitrary, as long as they are chosen from the same equivalence class. Further, Thm.~\ref{thm:termial} (see Appendix) states that even if there are uncountably many living states, the representative living states can be countably many, or even, a finite number. 

Our strategy for defining the dead state is to use the law of exclusion, of the form ``what is not living is dead''. This allows us to avoid direct characterization of dead states. In this sense, our characterization of death is similar to that of the regrow experiment, rather than the staining assay in which the dead state is characterized by stainability. For such indirect characterizations, it is necessary to determine whether a cell will eventually regrow. This is an extremely hard challenge in experimental systems. However, by focusing only on mathematical models, we can determine, in principle, whether the cell will regrow. The stoichiometric rays provides a reasonable method for computing this.

\tcr{Note that the present framework says nothing about which states should be chosen as representative living states. We assume that information about which states should be considered ``living'' is given by other sources, such as experiments and observations. The present framework is intended to provide a quantitative criterion for the irreversible transition from the living states to the dead states with respect to a given representative living states. Yet, the living/dead judgment is not that sensitive to the choice of representative living states. We have shown that choosing several states as the representative living states is typically equivalent to choosing a large variety of states as the representative living states (see Thm.~\ref{thm:termial} in the Appendix). Also, if the dead set with respect to the first choice of representative living states $X$ has states that are ``obviously living'' by other criteria, one can update the representative living states to $X'$ by adding those states to $X$ and compute the dead set $D(X')$. The present framework does not define life and death as if it were a transcendent being, but rather computes the dead states and the SANZ hypersurface based on one's working hypothesis of ``what is living''.}

The heart of the stoichiometric rays is that the modulation of enzymatic activity cannot directly change the directionalities of the reactions. As a consequence of this feature, the balance manifolds and direction subsets are invariant to enzymatic activities. This allows us to efficiently compute the stoichiometric rays. The usefulness of the stoichiometric rays was demonstrated by using two simple nonlinear models of chemical reactions. Especially, in the toy metabolic model, we showed that the stoichiometric rays is a powerful tool for computing the dead set in which no control can return the state to the active attractor with a given number of flips. 


Thermodynamic (ir)reversibility is believed to be key to grasping the biological (ir)reversibility such as the irreversibility of some diseases and death. However, there is no system-level link between thermodynamic and biological reversibilities except in a few cases where simple causality leads to biological irreversibility such as in transmissible spongiform encephalopathy \cite{collins2004transmissible}. The stoichiometric rays enables us to bridge the (ir)reversibility of each reaction to the system-level reversibility; Recall that the SANZ surface of the dead set is calculated from the thermodynamic constraint, i.e., directionality of the reactions, and the stoichiometric constraint. Further studies should pave the way to unveil the relationship between thermodynamic irreversibility and biological irreversibility. This may reveal the energy (or ``negentropy" \cite{schrodinger1944life}) required to maintain cellular states away from the SANZ hypersurface \cite{pirt1965maintenance}.

\tcr{We do not intend to extend the scope of the framework indiscriminately, while it is noteworthy that the framework does not assume any specific mechanism of cell death. There are multiple cell death mechanisms such as apoptosis, necrosis, ferroptosis, pyroptosis, autophagy, erebosis, and so on \cite{gouy2016results,ciesielski2022erebosis,D-Arcy2019-an,Newton2024-hh,Yuan2024-nv,Park2023-zw}. The molecular details and cellular behaviors during death processes differ depending on the mechanism. However, the irreversible transition to a state where multiple cellular functions are lost is common to all mechanisms. Our framework provides a quantitative criterion for the irreversible transition, and thus its applicability is not limited to specific mechanisms.}

Finally, we would like to mention an extension of the present framework to the probabilistic formulation and its implications. In the present deterministic version, the transition from the living state to the dead state is irreversible. In stochastic processes, however, the transition from the state considered dead in the deterministic model to the living states is possible with some probability. Thus, in probabilistic formulation, life and death are not binary categories, but are continuously linked states. Given the small size of cells, the probabilistic formulation is more suitable for describing cellular state transitions. Death is commonly believed to be irreversible. However, if the theory of death reveals the probabilistic nature of life-death transitions, it may have multiple implications for reforming the concept of ``death''.

\appendix*

\section{The choice of the representative living states}\label{sec:theorems}

\tcr{In this section we make some statements about the choice of representative living states. The main claim is that controllability leads to the equivalence relation and the partial order. The equivalence relation shows that there are pairs of states such that choosing one of them as the representative living state results in the same dead set as choosing both of them. The partial order shows that there is a reduction of the representative living state without changing the dead set.} 

Let us start introducing the the controllability relationship.
\begin{defi}[$\rightharpoonup$ and $\rightleftharpoons$]\label{defi:harpoon}
We denote $\bm x \rightharpoonup \bm y$ if $\bm x,\bm y\in \mathbb R_{\geq 0}^N$ satisfy $\bm x\in C(\bm y)$. If $\bm x\rightharpoonup \bm y$ and $\bm y\rightharpoonup \bm x$ holds, we denote $\bm x\rightleftharpoons \bm y$.
\end{defi}  
\begin{prop}\label{prop:preord}
$\rightharpoonup$ is preorder and $\rightleftharpoons$ is equivalence relation on $\mathbb R_{\geq 0}^N$.
\end{prop}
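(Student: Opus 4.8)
The plan is to verify the defining axioms directly: reflexivity and transitivity for the preorder $\rightharpoonup$, and then derive the three equivalence-relation axioms for $\rightleftharpoons$ from them together with the symmetry that is built into its definition.

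First I would establish reflexivity of $\rightharpoonup$, i.e.\ $\bm x\in C(\bm x)$ for every $\bm x$. Choosing the null control $\bm u\equiv\bm 0$ makes the right-hand side of Eq.~\eqref{eq:ode} vanish, so the constant curve $\bm x(t)\equiv\bm x$ solves the ODE on any interval $[0,T]$ and satisfies $\bm x(0)=\bm x(T)=\bm x$; hence $\mathcal T(\bm x,\bm x)\neq\varnothing$. (Equivalently one may take the degenerate $T=0$ trajectory, which the union in Def.~\ref{def:trajectories} admits.)

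The core step is transitivity: assuming $\bm x\rightharpoonup\bm y$ and $\bm y\rightharpoonup\bm z$, I pick a trajectory $\bm\xi_1\in\mathcal T(T_1,\bm x,\bm y)$ driven by a control $\bm u_1$ and a trajectory $\bm\xi_2\in\mathcal T(T_2,\bm y,\bm z)$ driven by $\bm u_2$, and concatenate them. Because the dynamics in Eq.~\eqref{eq:ode} carries no explicit time dependence beyond $\bm u(t)$ and $\bm x(t)$, I can time-shift the second piece and define the glued control $\bm u(t)=\bm u_1(t)$ on $[0,T_1]$ and $\bm u(t)=\bm u_2(t-T_1)$ on $[T_1,T_1+T_2]$, with the corresponding curve $\bm\xi$. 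The pieces match at $t=T_1$ since $\bm\xi_1(T_1)=\bm y=\bm\xi_2(0)$, so $\bm\xi$ is continuous, satisfies the integral form of Eq.~\eqref{eq:ode} on each subinterval, and runs from $\bm x$ to $\bm z$; thus $\mathcal T(\bm x,\bm z)\neq\varnothing$ and $\bm x\rightharpoonup\bm z$. The only point requiring care is admissibility of the concatenated control at the junction, where $\bm u$ may jump; this is harmless because the controls are permitted to be discontinuous and the state $\bm\xi$, being an integral of a bounded integrand, stays absolutely continuous across $t=T_1$.

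Finally I would read off the equivalence-relation axioms for $\rightleftharpoons$. Reflexivity is immediate from reflexivity of $\rightharpoonup$; symmetry is automatic, since $\bm x\rightleftharpoons\bm y$ is defined as the symmetric conjunction ``$\bm x\rightharpoonup\bm y$ and $\bm y\rightharpoonup\bm x$''; and transitivity follows by applying transitivity of $\rightharpoonup$ in each direction, as $\bm x\rightharpoonup\bm y\rightharpoonup\bm z$ yields $\bm x\rightharpoonup\bm z$ while $\bm z\rightharpoonup\bm y\rightharpoonup\bm x$ yields $\bm z\rightharpoonup\bm x$. The main obstacle, such as it is, is the concatenation argument for transitivity of $\rightharpoonup$; everything else is bookkeeping.
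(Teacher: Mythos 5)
Your proof is correct and is exactly what the paper intends: the paper gives no explicit argument, stating only that the claim is ``straightforward from the definition of the controllable set,'' and your reflexivity-plus-concatenation argument (with symmetry of $\rightleftharpoons$ built into its definition) is the standard filling-in of those details. One minor caveat: the null-control argument for reflexivity is only guaranteed for the enzymatic form $J_r=u_rf_rp_r$ of Eq.~\eqref{eq:reaction}, not for the general $\bm J$ of Eq.~\eqref{eq:ode}, but your parenthetical fallback---the degenerate $T=0$ trajectory admitted by Def.~\ref{def:trajectories}---covers the general case, so the proof stands.
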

\noindent The proof is straightforward from the definition of the controllable set.

From the proposition, we obtain the following corollary. 
\begin{col}\label{col:preord}
For any $\bm x,\bm y\in \mathbb R_{\geq 0}^N$ satisfying $\bm x\rightharpoonup \bm y$, 
\begin{equation}
    C(\bm x)\subset C(\bm y)
\end{equation}
holds. If $\bm x\rightleftharpoons \bm y$ holds, we have 
\begin{equation}
    C(\bm x)=C(\bm y).
\end{equation} 
\end{col}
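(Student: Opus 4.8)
The plan is to recognize that, by Definition~\ref{defi:harpoon}, membership in a controllable set is literally the relation $\rightharpoonup$: for any $\bm z$ we have $\bm z\in C(\bm x)$ if and only if $\bm z\rightharpoonup \bm x$. Hence $C(\bm x)$ is nothing but the principal down-set $\{\bm z\in\mathbb R_{\geq 0}^N\mid \bm z\rightharpoonup \bm x\}$ of $\bm x$ under the preorder $\rightharpoonup$, and the corollary becomes a purely order-theoretic statement: down-sets are monotone in their base point, and elements comparable in both directions have equal down-sets. The whole argument therefore rests on the transitivity of $\rightharpoonup$ supplied by Proposition~\ref{prop:preord}.

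First I would prove the inclusion. Fix $\bm x,\bm y$ with $\bm x\rightharpoonup\bm y$ and take an arbitrary $\bm z\in C(\bm x)$, which by the reformulation above reads $\bm z\rightharpoonup\bm x$. Chaining this with the hypothesis $\bm x\rightharpoonup\bm y$ and applying transitivity of $\rightharpoonup$ (Proposition~\ref{prop:preord}) yields $\bm z\rightharpoonup\bm y$, i.e.\ $\bm z\in C(\bm y)$. As $\bm z$ was arbitrary, $C(\bm x)\subset C(\bm y)$. For the equality, I would unpack $\bm x\rightleftharpoons\bm y$ into the two relations $\bm x\rightharpoonup\bm y$ and $\bm y\rightharpoonup\bm x$, apply the inclusion just proved in each direction to obtain $C(\bm x)\subset C(\bm y)$ and $C(\bm y)\subset C(\bm x)$, and conclude $C(\bm x)=C(\bm y)$.

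Given that Proposition~\ref{prop:preord} is already available, there is essentially no obstacle; the content of the corollary is entirely carried by transitivity. The only place where real work would be required is if one wanted a self-contained proof bypassing Proposition~\ref{prop:preord}: there the crux is to show directly that $\bm z\rightharpoonup\bm x$ and $\bm x\rightharpoonup\bm y$ imply $\bm z\rightharpoonup\bm y$ by concatenating a trajectory in $\mathcal T(\bm z,\bm x)$ with one in $\mathcal T(\bm x,\bm y)$. Concretely, one glues a solution reaching $\bm x$ at time $T_1$ to a solution leaving $\bm x$ and reaching $\bm y$ at time $T_2$, producing a solution of Eq.~\eqref{eq:ode} on $[0,T_1+T_2]$ running from $\bm z$ to $\bm y$; this is well defined precisely because Definition~\ref{def:trajectories} takes the union over all horizons $T$ and allows the control $\bm u(t)$ to be chosen piecewise, so the concatenated control is admissible and $\mathcal T(\bm z,\bm y)\neq\varnothing$. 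This gluing step is the substantive ingredient, but it is exactly what Proposition~\ref{prop:preord} has already settled, so in the present proof it can simply be cited.
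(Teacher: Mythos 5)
Your proof is correct and follows essentially the same route as the paper's: identify $C(\bm x)$ with $\{\bm z \mid \bm z\rightharpoonup\bm x\}$, apply transitivity of $\rightharpoonup$ from Proposition~\ref{prop:preord} to get the inclusion, and run the argument in both directions for the equality. Your closing remark on concatenating trajectories is exactly the content already settled by Proposition~\ref{prop:preord}, so citing it, as you do, is the right move.
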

\begin{proof}
For any $\bm z\in C(\bm x)$, $\bm z\rightharpoonup\bm x$ holds. From the assumption, we also have $\bm x\rightharpoonup\bm y$. Thus, the transitivity of $\rightharpoonup$ results in $\bm z\rightharpoonup\bm y$. Therefore, $C(\bm x)\subset C(\bm y)$ holds. By repeating the same argument for $\bm y\rightharpoonup \bm x$, we have the second statement. 
\end{proof}

Thereby, if the states $\bm x,\bm y\in \mathbb R_{\geq 0}^N$ are mutually controllable to each other $\bm x \rightleftharpoons \bm y$, then the dead set is the same whether $\bm x$ or $\bm y$ is chosen as the representative living state. This implies that we can use the quotient set $\mathbb R_{\geq 0}^N/{\rightleftharpoons}$ for studying the controllable set of the states instead of the original space $\mathbb R_{\geq 0}^N$. 

\begin{defi}[Quotient set and Equivalence class]\label{defi:equivalence}
    The quatient set of $\mathbb R_{\geq 0}^N$ by $\rightleftharpoons$ is represented by $\mathbb R_{\geq 0}^N/{\rightleftharpoons}$, and the equivalence class of $\bm x$ is denoted by $[\bm x]\in \mathbb R_{\geq 0}^N/{\rightleftharpoons}$.
\end{defi}

\begin{defi}[$\looparrowright$]\label{defi:loop}
    For any $[\bm x],[\bm y]\in \mathbb R_{\geq 0}^N/{\rightleftharpoons}$, if $\bm x\in[\bm x]$ and $\bm y\in [\bm y]$ exist such that $\bm x\rightharpoonup \bm y$ holds \tcr{(see Def.~\ref{defi:harpoon})}, we denote $[\bm x]\looparrowright [\bm y]$.
\end{defi}

\begin{prop}\label{prop:partord}
$\looparrowright$ is partial order on $\mathbb R_{\geq 0}^N/{\rightleftharpoons}$.
\end{prop}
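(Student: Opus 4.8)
The plan is to verify the three defining properties of a partial order---reflexivity, transitivity, and antisymmetry---for $\looparrowright$ on $\mathbb{R}_{\geq 0}^N/{\rightleftharpoons}$, leveraging Prop.~\ref{prop:preord} (that $\rightharpoonup$ is a preorder and $\rightleftharpoons$ an equivalence relation). Before checking these, the first and most important step is to show that $\looparrowright$ is \emph{well-defined} in the strong sense that it does not depend on the choice of representatives: if $\bm x\rightharpoonup\bm y$ holds for some $\bm x\in[\bm x]$ and $\bm y\in[\bm y]$, then $\bm x'\rightharpoonup\bm y'$ holds for \emph{every} $\bm x'\in[\bm x]$ and $\bm y'\in[\bm y]$. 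Indeed, $\bm x'\rightleftharpoons\bm x$ gives $\bm x'\rightharpoonup\bm x$, and $\bm y\rightleftharpoons\bm y'$ gives $\bm y\rightharpoonup\bm y'$, so the transitivity of $\rightharpoonup$ yields $\bm x'\rightharpoonup\bm x\rightharpoonup\bm y\rightharpoonup\bm y'$, hence $\bm x'\rightharpoonup\bm y'$.

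With this representative-independence in hand, reflexivity is immediate: since $\rightharpoonup$ is reflexive, $\bm x\rightharpoonup\bm x$ holds, so $[\bm x]\looparrowright[\bm x]$. For transitivity, suppose $[\bm x]\looparrowright[\bm y]$ and $[\bm y]\looparrowright[\bm z]$. By the well-definedness step I may evaluate both relations on a common representative $\bm y\in[\bm y]$, obtaining $\bm x\rightharpoonup\bm y$ and $\bm y\rightharpoonup\bm z$ for suitable $\bm x\in[\bm x]$ and $\bm z\in[\bm z]$; transitivity of $\rightharpoonup$ then gives $\bm x\rightharpoonup\bm z$, i.e.\ $[\bm x]\looparrowright[\bm z]$.

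Antisymmetry is where the quotient by $\rightleftharpoons$ does its work. Assume $[\bm x]\looparrowright[\bm y]$ and $[\bm y]\looparrowright[\bm x]$. Fixing representatives $\bm x\in[\bm x]$ and $\bm y\in[\bm y]$, the representative-independence established above lets me realize both relations on this same pair, so $\bm x\rightharpoonup\bm y$ and $\bm y\rightharpoonup\bm x$. By Def.~\ref{defi:harpoon} this means $\bm x\rightleftharpoons\bm y$, whence $[\bm x]=[\bm y]$ in the quotient set.

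I expect the only genuine subtlety to be the first step, the representative-independence of $\looparrowright$: the bare Def.~\ref{defi:loop} only asserts the existence of \emph{some} pair of representatives with $\bm x\rightharpoonup\bm y$, whereas both transitivity and antisymmetry require pinning the relation onto a common or prescribed representative. The transitivity of $\rightharpoonup$ supplied by Prop.~\ref{prop:preord} is exactly what upgrades this existential statement to a universal one, after which the three order axioms are routine.
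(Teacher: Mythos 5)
Your proof is correct and takes essentially the same route as the paper's: both verify reflexivity, transitivity, and antisymmetry of $\looparrowright$ by reducing each to the corresponding property of $\rightharpoonup$, using the fact that once $[\bm x]\looparrowright[\bm y]$ holds, $\bm x\rightharpoonup\bm y$ holds for \emph{all} representatives. You are in fact slightly more thorough than the paper, which states this representative-independence only as an unproved ``Note,'' whereas you derive it explicitly from the transitivity of $\rightharpoonup$.
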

\begin{proof} 
    Note that if we have $[\bm x]\looparrowright [\bm y]$, $\bm x \rightharpoonup \bm y$ holds for any $\bm x\in[\bm x]$ and $\bm y\in[\bm y]$.
    
    Reflexibility: Since $\bm x\rightharpoonup\bm x$ holds, $[\bm x]\looparrowright [\bm x]$ holds. 

    Transitivity: If $[\bm x]\looparrowright [\bm y]$ and $[\bm y]\looparrowright [\bm z]$ hold, $\bm x\rightharpoonup \bm y$ and $\bm y\rightharpoonup \bm z$ hold. Thus, $\bm x\rightharpoonup \bm z$ holds from the transitivity of $\rightharpoonup$, and $[\bm x]\looparrowright [\bm z]$ holds.

    Antisymmetry: If $[\bm x]\looparrowright [\bm y]$ and $[\bm y]\looparrowright [\bm x]$ hold, $\bm x\rightharpoonup \bm y$ and $\bm y\rightharpoonup \bm x$ hold. Thus, $\bm x\rightleftharpoons \bm y$ holds, and we have $[\bm x]=[\bm y]$.
    
\end{proof}   

\begin{col}\label{col:partord}
    For any $[\bm x],[\bm y]\in \mathbb R_{\geq 0}^N/{\rightleftharpoons}$, if $[\bm x]\looparrowright [\bm y]$ holds, we have 
    \begin{equation}
        C([\bm x])\subset C([\bm y]).
    \end{equation}
\end{col}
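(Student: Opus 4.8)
The plan is to reduce the statement to the point-wise Corollary~\ref{col:preord} by unfolding the definition of the controllable set of a \emph{set}. Recall from Definition~\ref{def:ctrl_set} that $C([\bm x]) = \bigcup_{\bm x'\in[\bm x]} C(\bm x')$ and likewise $C([\bm y]) = \bigcup_{\bm y'\in[\bm y]} C(\bm y')$, so it suffices to show that every $C(\bm x')$ appearing in the first union is contained in the second union.

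The key ingredient is the observation already recorded in the proof of Proposition~\ref{prop:partord}: whenever $[\bm x]\looparrowright[\bm y]$, one in fact has $\bm x'\rightharpoonup\bm y'$ for \emph{every} choice of representatives $\bm x'\in[\bm x]$ and $\bm y'\in[\bm y]$, not merely for the particular pair witnessing the relation. This upgrade of the bare existential definition (Definition~\ref{defi:loop}) is what lets the argument run uniformly across the union.

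First I would fix an arbitrary representative $\bm y^\star\in[\bm y]$, which exists since $\rightleftharpoons$ is reflexive. For an arbitrary $\bm x'\in[\bm x]$ the observation above gives $\bm x'\rightharpoonup\bm y^\star$, and then Corollary~\ref{col:preord} yields $C(\bm x')\subset C(\bm y^\star)$. Since $C(\bm y^\star)\subset\bigcup_{\bm y'\in[\bm y]}C(\bm y')=C([\bm y])$, each term of the union defining $C([\bm x])$ lies in $C([\bm y])$; taking the union over all $\bm x'\in[\bm x]$ then gives $C([\bm x])\subset C([\bm y])$, as desired. Alternatively one can argue element-wise: for $\bm z\in C([\bm x])$ there is $\bm x'\in[\bm x]$ with $\bm z\rightharpoonup\bm x'$, and composing with $\bm x'\rightharpoonup\bm y^\star$ via transitivity of $\rightharpoonup$ (Proposition~\ref{prop:preord}) gives $\bm z\rightharpoonup\bm y^\star$, i.e.\ $\bm z\in C(\bm y^\star)\subset C([\bm y])$.

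I do not anticipate a genuine obstacle, since the claim is essentially the set-valued restatement of Corollary~\ref{col:preord}. The only point requiring care is bookkeeping at the level of representatives: one must invoke the ``for every representative'' strengthening of $\looparrowright$ so that the containment holds for all terms $C(\bm x')$ simultaneously and does not hinge on a fortunate choice of representatives.
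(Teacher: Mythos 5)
Your proposal is correct and follows essentially the same route as the paper: both unfold $C([\bm x])=\bigcup_{\bm x'\in[\bm x]}C(\bm x')$, invoke the ``for every representative'' strengthening of $\looparrowright$ (recorded at the start of the proof of Proposition~\ref{prop:partord}), and reduce everything to Corollary~\ref{col:preord}. The only cosmetic difference is that the paper first collapses each union to a single representative's controllable set via $C([\bm x])=C(\bm x)$ and then applies Corollary~\ref{col:preord} once, whereas you bound each term of the union individually; the two arguments are interchangeable.
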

\begin{proof}
The controllable set of the equivalence class $[\bm x]$ is a union of all points in $[\bm x]$, i.e., $C([\bm x])=\cup_{\bm x'\in [\bm x]}C(\bm x')$. From Cor.~\ref{col:preord}, the controllable set of every $\bm x\in [\bm x]$ is identical. Therefore, we can take a representative state $\bm x\in [\bm x]$ and $C([\bm x])=C(\bm x)$ holds. The same argument for $[\bm y]$ leads to $C([\bm y])=C(\bm y),\  (\bm y\in [\bm y])$. By appliying Cor.~\ref{col:preord} to $C(\bm x)$ and $C(\bm y)$, we obtain $C([\bm x])=C(\bm x)\subset C(\bm y)=C([\bm y])$.
\end{proof}

This corollary states that the ``larger'' equivalence class in terms of $\looparrowright$ \footnote{Here, we say that $[\bm x]$ is ``larger'' than $[\bm y]$ if $[\bm y]\looparrowright [\bm x]$ holds.} has the larger controllable set. This motivates us to introduce the {\it terminal class} of the representative living states $X$.

\begin{defi}[Terminal Class]\label{def:terminal}
    For a given set $X\subset \mathbb R_{\geq 0}^N$, we take the set of the equivalence classes $[\bm x]\in \mathbb R_{\geq 0}^N/{\rightleftharpoons}$ with nonempty intersection with $X$ and denote it by ${\mathcal X}\subset \mathbb R_{\geq 0}^N/{\rightleftharpoons}$ i.e., 
    \begin{equation}
        {\mathcal X}:=\bigl \{[\bm x]\in\mathbb R_{\geq 0}^N/{\rightleftharpoons}\mid [\bm x]\cap X\neq \varnothing\bigr \}.
    \end{equation}

    We call the maximal element of ${\mathcal X}$ with respect to $\looparrowright$ the terminal class of $X$. We denote the set of all terminal classes of $X$ by $\Omega(X)$ \tcr{(for $\mathbb R_{\geq 0}^N/{\rightleftharpoons}$ and $\looparrowright$, see Def.~\ref{defi:equivalence} and Def.~\ref{defi:loop}, respectively.). }
\end{defi}
\noindent Note that an equivalence class $[\bm x^*]\in{\mathcal X}$ being the terminal class means that there is no other equivalence class $[\bm x]\in{\mathcal X},\ ([\bm x]\neq[\bm x^*])$ satisfying $[\bm x^*]\looparrowright[\bm x]$ exists. 

$\Omega(X)$ can be an empty, finite, or infinite set. However, if we can assume that every chain in ${\mathcal X}$ (a totally ordered subset of ${\mathcal X}$) has an upper bound with respect to $\looparrowright$, Zorn's lemma guarantees $\Omega(X)$ being nonempty. To satisfy this condition, it is sufficient that the model (Eq.~\eqref{eq:ode}) has a choice of the constant control parameter $\bm u$ with which the model never shows a divergent behavior. 

The controllable set of each terminal class $[\bm x^*]$ contains that of any other equivalence class satisfying $[\bm x]\looparrowright [\bm x^*]$. This allows us to reduce the representative living states $X$ into a smaller set $X^*$ without changing the dead set. 

For $X\subset \mathbb R_{\geq 0}^N$, suppose that $\Omega(X)$ is nonempty. Let us take a point from each terminal class in $\Omega(X)$ and denote the set of the chosen points by $X^*\subset \mathbb R_{\geq 0}^N$, that is,
\begin{equation}
    X^*=\bigcup_{[\bm x^*]\in\Omega(X)}\{\bm x^*\},
\end{equation}
where the choise of the representative element from $[\bm x^*]$ is arbitrary. We have the following theorem for the dead set with respect to $X$ and $X^*$ \tcr{(For a graphical representation of the theorem, see Fig.~\ref{fig:equivalence}).}
\begin{thm}\label{thm:termial}
The dead set with respect to $X$ is identical to the dead set with respect to $X^*$, i.e., 
\begin{equation}
    D(X)=D(X^*).
\end{equation}
\end{thm}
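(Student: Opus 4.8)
The plan is to reduce the claim to an equality of controllable sets. Since, by Def.~\ref{def:death}, both dead sets are defined as the complement in $\mathbb R_{\geq 0}^N$ of the corresponding controllable set, the statement $D(X)=D(X^*)$ is equivalent to $C(X)=C(X^*)$. I would therefore prove the two inclusions $C(X^*)\subseteq C(X)$ and $C(X)\subseteq C(X^*)$ separately.

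For the first inclusion I would exploit that every terminal class $[\bm x^*]\in\Omega(X)$ lies in $\mathcal X$ and hence satisfies $[\bm x^*]\cap X\neq\varnothing$. Choosing some $\bm x'\in[\bm x^*]\cap X$, the relation $\bm x^*\rightleftharpoons\bm x'$ together with Cor.~\ref{col:preord} gives $C(\bm x^*)=C(\bm x')\subseteq C(X)$, where the inclusion holds because $\bm x'\in X$. Taking the union over $\Omega(X)$ yields $C(X^*)\subseteq C(X)$. This direction is routine.

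The harder inclusion is $C(X)\subseteq C(X^*)$. Given $\bm z\in C(X)$, there is $\bm x\in X$ with $\bm z\in C(\bm x)$, and $[\bm x]\in\mathcal X$. The key step is to produce a terminal class $[\bm x^*]\in\Omega(X)$ with $[\bm x]\looparrowright[\bm x^*]$. I would obtain this by applying Zorn's lemma not to $\mathcal X$ directly but to the up-set $\mathcal U:=\{[\bm y]\in\mathcal X\mid [\bm x]\looparrowright[\bm y]\}$: this set is nonempty since it contains $[\bm x]$ by reflexivity of $\looparrowright$, and under the standing assumption that every chain in $\mathcal X$ has an upper bound, every chain in $\mathcal U$ has an upper bound that still dominates $[\bm x]$ by transitivity and hence remains in $\mathcal U$. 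Zorn's lemma then supplies a maximal element of $\mathcal U$, which is in fact maximal in $\mathcal X$ (any class above it in $\mathcal X$ would, by transitivity through $[\bm x]$, lie in $\mathcal U$ and thus equal it), i.e. a terminal class $[\bm x^*]\in\Omega(X)$ with $[\bm x]\looparrowright[\bm x^*]$. Applying Cor.~\ref{col:partord} converts this into $C([\bm x])\subseteq C([\bm x^*])$, and since controllable sets are constant on equivalence classes we conclude $\bm z\in C(\bm x)=C([\bm x])\subseteq C([\bm x^*])=C(\bm x^*)\subseteq C(X^*)$, where $\bm x^*$ is the representative selected for $X^*$.

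The main obstacle is precisely this existence step: Zorn's lemma as ordinarily stated only guarantees that some maximal element exists, whereas here I need every class in $\mathcal X$ to sit below a terminal one. Passing to the up-set $\mathcal U$ is the device that upgrades ``existence of a maximal element'' to ``domination of each element'', and the point requiring care is verifying that upper bounds of chains in $\mathcal U$ genuinely remain in $\mathcal U$, which is where transitivity of $\looparrowright$ enters. Once this is established, combining the two inclusions gives $C(X)=C(X^*)$ and therefore $D(X)=D(X^*)$.
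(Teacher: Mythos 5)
Your proof is correct, and its skeleton (reduce $D(X)=D(X^*)$ to $C(X)=C(X^*)$, then exploit that controllable sets are constant on equivalence classes via Cor.~\ref{col:preord} and monotone along $\looparrowright$ via Cor.~\ref{col:partord}) matches the paper's. The difference lies in how the reduction to terminal classes is justified. The paper rewrites $C(X)=\bigcup_{[\bm x]\in\mathcal X}C([\bm x])$ and then argues that any class dominated by another may be dropped from the union, concluding that ``by repeating the dropping process, only the terminal classes remain.'' That iterative phrasing is informal: when $\mathcal X$ contains infinitely many classes, the dropping argument is only valid if one already knows that every class in $\mathcal X$ is dominated by some terminal class, and that is precisely the statement your up-set argument proves. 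Applying Zorn's lemma to $\mathcal U=\{[\bm y]\in\mathcal X\mid[\bm x]\looparrowright[\bm y]\}$ --- with the checks that upper bounds of chains in $\mathcal U$ remain in $\mathcal U$ by transitivity, and that a maximal element of $\mathcal U$ is maximal in all of $\mathcal X$ --- upgrades ``a maximal element exists'' to ``every class sits below a maximal one,'' which is the lemma the paper's proof implicitly relies on. Your version also makes the correct hypothesis visible: what is needed is the chain/upper-bound condition on $\mathcal X$, not merely nonemptiness of $\Omega(X)$; if $\mathcal X$ contained an unbounded ascending chain alongside an unrelated maximal class, $\Omega(X)$ would be nonempty and yet the inclusion $C(X)\subseteq C(X^*)$ could fail. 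In short, your route buys rigor exactly where the paper is loose, at the cost of a somewhat longer argument; the paper's dropping argument is shorter and is fully adequate whenever $\mathcal X$ is finite.
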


\begin{proof}
Recall that $D(X)=D(X^*)$ is equivalent to $C(X)=C(X^*)$. The controllable set of $X$ is defined as
\begin{equation}
    C(X)=\bigcup_{\bm x\in X}C(\bm x).
\end{equation}
As the controllable set of the equivalence class $C([\bm x])$ and the controllable set of a point $\bm x\in [\bm x]$ are identical (Cor.~\ref{col:preord}), we have
\begin{equation}
C(X)=\bigcup_{[\bm x]\in {\mathcal X}}C([\bm x])\label{eq:proof_termial}
\end{equation}
where ${\mathcal X}$ is what defined in Def.~\ref{def:terminal}. 

For each $[\bm x]\in {\mathcal X}$, if there exists $[\bm y]\in {\mathcal X},\ ([\bm y]\neq [\bm x])$ such that $[\bm y]\looparrowright [\bm x]$ holds, we have $C([\bm y])\subset C([\bm x])$ from Cor.~\ref{col:partord}. Therefore, we can drop $C([\bm y])$ from the union in Eq.~\eqref{eq:proof_termial} without violating the equality. By repeating the dropping process, only the terminal classes remain and we obtain the following equality:
\begin{equation}
C(X)=\bigcup_{[\bm x]\in \Omega(X)}C([\bm x]).
\end{equation}

Additionally, since $C([\bm x])=C(\bm x)$ holds for any representative point $\bm x\in [\bm x]$, we have
\begin{equation}
C(X)=\bigcup_{\bm x\in X^*}C(\bm x).
\end{equation}

\end{proof}

The theorem states a useful property for choosing the representative living states to compute a dead set. Suppose that we have the representative living state $X\subset \mathbb R_{\geq 0}^N$. Without the theorem, we must check the controllability of all points in $X$ to judge the dead state. However, with this theorem, we can take discrete points from each terminal class in $\Omega(X)$ and check the controllability of the chosen points to judge the dead state. Note that the set $X^*$ can also be finite, which enables further efficient computations of the dead set.

An example of practical applications of this theorem is as follows. Suppose that the model has a point attractor $\bm a_{\bm u}$ with a fixed control parameter value $\bm u$, and $\bm a_{\bm u}$ is a continuous function of $\bm u$ in $U\subset \mathbb R^P$. In such cases, taking the point attractor with a single parameter choice $\bm u_0\in U$ is equivalent to taking the basin of attraction  $B(\bm a_{\bm u})$ for all $\bm a_{\bm u}$, as the representative living states, i.e., 
\begin{equation}
D(\bm a_{\bm u_0})=D \Bigl(\bigcup_{\bm u\in U}B(\bm a_{\bm u})\Bigr).\label{eq:thm_basin}
\end{equation}
This is because $\bm a_{\bm u_1}\rightleftharpoons \bm a_{\bm u_2}$ holds for any $\bm u_1,\bm u_2\in U$ and also we have $\bm x\rightharpoonup \bm a_{\bm u}$ for any $\bm x\in B(\bm a_{\bm u})$.

\begin{figure}[htbp]
    \begin{center}
    \includegraphics[width = 120 mm, angle = 0]{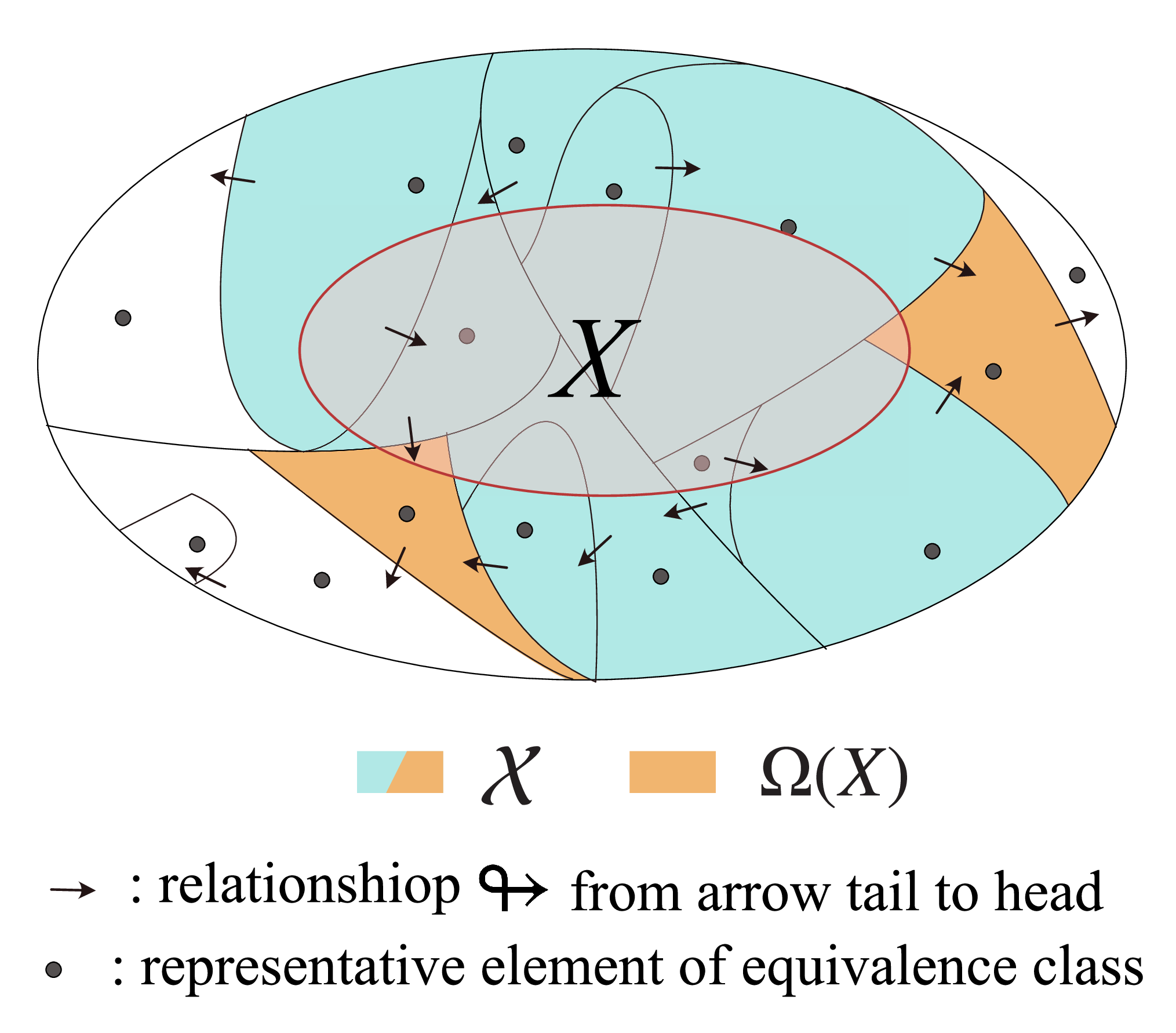}
    \caption{\tcr{The equivalance classes and the representative living states $X$. The large and small ellipses represent $\mathbb R^N_{\geq0}$ and the representative living states $X$, respectively. Each compartment is the equivalence class $[\bm x]\in \mathbb R_{\geq 0}^N/{\leftrightharpoons}$. Where the arrow represents the partial order $\looparrowright$ from the tail of the arrow to the head. Each black point is the representative element of each equivalence class. The set $\cal X$ is the set of equivalence classes filled with cyan or orange. The terminal classes $\Omega(X)\subset \cal X$ is the set of equivalence classes filled with orange. According to Theorem.~\ref{thm:termial}, the controllability to $X$ is fully captured by computing the controllability to the representative elements of $[\bm x]\in\Omega(X)$.}}
        \label{fig:equivalence}
      \end{center}
    \end{figure}



\section*{Acknowledgments}
We thank Chikara Furusawa and Ikumi Kobayashi for the discussions. This work is supported by JSPS KAKENHI (Grant Numbers JP22K15069, JP22H05403 to Y.H.; 24K00542, 19H05799  to T.J.K.), JST (Grant Numbers JPMJCR2011, JPMJCR1927 to T.J.K.), and GteX Program Japan Grant Number JPMJGX23B4. S.A.H. is financially supported by the JSPS Research Fellowship Grant No. JP21J21415.

\section*{SI Codes}
All codes for the main result are deposited on GitHub (\url{https://github.com/yhimeoka/StoichiometricRay}).

\end{document}